\documentclass[a4paper,UKenglish,cleveref, autoref]{lipics-v2019}
%This is a template for producing LIPIcs articles.
%See lipics-manual.pdf for further information.
%for A4 paper format use option "a4paper", for US-letter use option "letterpaper"
%for british hyphenation rules use option "UKenglish", for american hyphenation rules use option "USenglish"
%for section-numbered lemmas etc., use "numberwithinsect"
%for enabling cleveref support, use "cleveref"
%for enabling cleveref support, use "autoref"

%\newcommand\black{\ensuremath{\blacktriangleright}}
%\newcommand\white{\ensuremath{\vartriangleright}}

%%%%%%%%%%%%%%% above: things from FAC
%%%%%%%%%%%%%%% packages
\usepackage{amsmath}
\usepackage{amssymb}
\usepackage{amsthm}
\usepackage{stmaryrd}

\usepackage{graphicx}
\usepackage{latexsym}
\usepackage{url}
\usepackage{color}
\usepackage{tikz}
\usetikzlibrary{automata}
\usepackage{algorithm,algorithmic}

%%%%%%%%%%%%%%% symbols
\newcommand{\rrangle}{\rangle\!\rangle}
\newcommand{\llangle}{\langle\!\langle}
                                         %Box
\newcommand{\eventually}{\Diamond}                                 %Diamond
                                   %Next
                                      %Until
                                          %Release
                                             %Globally
                                         %Eventually
\newcommand{\strat}[1]{\llangle#1\rrangle}                         %Strategic Diamond
\newcommand{\nostrat}[1]{\llbracket #1 \rrbracket}         %Strategic Box

%Chenyi's symbols
\newcommand{\word}[1]{\langle #1\rangle}                           %tuple --angled
                        %double -- double angled
\newcommand{\set}[1]{\{#1\}}
\newcommand{\LT}[1]{\stackrel{#1}{\rightarrow}}

\newcommand{\Nat}{\mathbb{N}}
\newcommand{\A}{\texttt{Act}}

\newcommand\Prop{\texttt{Prop}}
\newcommand\Sim{\sqsubseteq}                                        %simulation relation_2
                                         %bisimulation
                             %forward simulation
                              %forward bisimulation

                                              %strategy
                                              %stragegy set
                                         %system -- GS
\newcommand\dist{\mathcal{D}}

                                         %until ??

                                       %probabilistic execution
                                     %half probabilistic execution
\newcommand\step{\delta}                                         %transition relation

\renewcommand\L{\mathcal{L}}

                                    %vector

\newcommand\G{\mathcal{G}}                                       %GS-- (probabilistic) Game structure
                               %complement (set)

\newcommand\lift[1]{\overline{#1}}                               %lifting
\newcommand\R{\mathcal{R}}                                       %relation (general)
                                       %Next
\newcommand\Supp[1]{\lceil#1\rceil}                              %Support (of a distribution)
                                        %probabilistic +
\newcommand\pd[1]{\widetilde{#1}}                                %point distribution
                                      %number of players

\newcommand{\sem}[1]{\nostrat{#1}}                               %semantics
                                  %PATL*
\newcommand\PATL{PATL}                                           %PATL
                                  %last member in a trace
                                       %(probabilistic) execution tree ?
                                            %set of plays
                                    %set of infinite plays
                                         %set of finite plays
\newcommand\pone{{\cal I}}
\newcommand\ptwo{{\cal {I\!I}}}
\newcommand\commentout[1]{}

\newcommand{\power}{\mathcal{P}}

\newcommand{\LP}{\L^{\oplus}}
\newcommand{\LM}{\L^{\mu}}
\newcommand{\VAR}{\mathcal{V}}

%\newcommand{\qed}{\hspace{\fill}$\Box$}

%\newtheorem{theorem}{Theorem}%[section]
%\newtheorem{lemma}{Lemma}%[section]
%\newtheorem{definition}{Definition}%[section]
%\newtheorem{example}{Example}%[section]
%\newtheorem{proposition}{Proposition}%[section]
%\newtheorem{corollary}{Corollary}%[section]

%%%%%%%%%%%%%%%%%%%%%%%%%%%%
%for comments
%\newcommand\JP[1]{\textcolor{magenta}{JP: #1}}
%\newcommand\CZ[1]{\textcolor{blue}{CZ: #1}}
%%%%%%%%%%%%%%%%%%%%%%%%%%%

%\graphicspath{{./graphics/}}%helpful if your graphic files are in another directory

\bibliographystyle{plainurl}% the mandatory bibstyle

\title{Characterising Probabilistic Alternating Simulation for Concurrent Games} %TODO Please add

\titlerunning{PA-simulation for concurrent games}%optional, please use if title is longer than one line

\author{Chenyi Zhang}{College of Information Science and Technology \& \\
College of Cyber Security\\
Jinan University, China}{chenyi\_zhang@jnu.edu.cn}{}{}

\author{Jun Pang}{Faculty of Science, Technology and Communication \& \\
Interdisciplinary Centre for Security, Reliability and Trust\\
University of Luxembourg, Luxembourg}{jun.pang@uni.lu}{}{}

\authorrunning{C.Zhang and J.Pang}%TODO mandatory. First: Use abbreviated first/middle names. Second (only in severe cases): Use first author plus 'et al.'

\Copyright{Chenyi Zhang and Jun Pang}%TODO mandatory, please use full first names. LIPIcs license is "CC-BY";  http://creativecommons.org/licenses/by/3.0/

\ccsdesc[100]{Theory of computation}
\ccsdesc[100]{Modal and temporal logics}%TODO mandatory: Please choose ACM 2012 classifications from https://dl.acm.org/ccs/ccs_flat.cfm

\keywords{Concurrent game, Probabilistic alternating simulation, Logic characterisation}%TODO mandatory; please add comma-separated list of keywords

\category{}%optional, e.g. invited paper

%\relatedversion{}%optional, e.g. full version hosted on arXiv, HAL, or other respository/website
%\relatedversion{A full version of the paper is available at \url{...}.}

%\supplement{}%optional, e.g. related research data, source code, ... hosted on a repository like zenodo, figshare, GitHub, ...

%\funding{(Optional) general funding statement \dots}%optional, to capture a funding statement, which applies to all authors. Please enter author specific funding statements as fifth argument of the \author macro.

%\acknowledgements{I want to thank \dots}%optional

\nolinenumbers %uncomment to disable line numbering

%\hideLIPIcs  %uncomment to remove references to LIPIcs series (logo, DOI, ...), e.g. when preparing a pre-final version to be uploaded to arXiv or another public repository

%Editor-only macros:: begin (do not touch as author)%%%%%%%%%%%%%%%%%%%%%%%%%%%%%%%%%%
\EventEditors{AAA}
\EventNoEds{2}
\EventLongTitle{Conference}
\EventShortTitle{Submission}
\EventAcronym{BBB}
\EventYear{2019}
\EventDate{2019}
\EventLocation{XXX}
\EventLogo{}
\SeriesVolume{XX}
\ArticleNo{XX}
%%%%%%%%%%%%%%%%%%%%%%%%%%%%%%%%%%%%%%%%%%%%%%%%%%%%%%

\begin{document}

\maketitle

% =============================================
\begin{abstract}
Probabilistic game structures combine both nondeterminism and stochasticity,
where players repeatedly take actions simultaneously to move to the next state of the concurrent game.
Probabilistic alternating simulation is an important tool
to compare the behaviour of different probabilistic game structures.
In this paper, we present a sound and complete modal characterisation of this simulation relation
by proposing a new logic based on probabilistic distributions.
The logic enables a player to enforce a property in the next state or distribution.
Its extension with fixpoints, which also characterises the simulation relation,
can express a lot of interesting properties in practical applications.
\end{abstract}
% =============================================

% =============================================
\section{Introduction}
% =============================================

%Some introductory description from the last draft about the usefulness of (P)ATL, (concurrent) games,
%simulation, and modal characterization etc.
Simulation relations and bisimulation relations~\cite{Mil89} are important
research topics in concurrency theory.
In the classical model of labelled transition systems (LTS),
simulation and bisimulation have been proved useful for comparing the behaviour of concurrent
systems. The modal characterisation problem has been studied both in classical and in probabilistic systems,
i.e., the Hennessy-Milner logic (HML)~\cite{HM85} that characterises image-finite LTS, and various
modal logics have been proposed to characterise strong and weak probabilistic (bi)simulation
in the model of probabilistic automata~\cite{PS07,DG10,HPSWZ11}.
To study multi-player games, the concurrent game structure (GS)~\cite{AHK02} is a model
that defines a system that evolves while interacting with outside \emph{players}.
As a player's behaviour is not fully specified within a system, GS are often also known as \emph{open} systems.
Alternating simulation (A-simulation) is defined in GS focusing on players ability to enforce temporal properties
specified in alternating-time temporal logic (ATL)~\cite{AHK02}, and A-simulation is shown to be
sound and complete to a fragment of ATL~\cite{AHKV98}.

In this paper, we work on the model of probabilistic game structure (PGS) which has probabilistic transitions.
PGS also allows probabilistic (or mixed) choices of players. In this setting, we assume all players have complete
information about system states.
%CZ4: This makes PGS an appealing model for concurrent games with applications in many domains such as robotics, security, and autonomous transport. %%%
The simulation relation in PGS, called probabilistic alternating simulation (PA-simulation),
has been shown to preserve a fragment of probabilistic alternating-time temporal logic (PATL)
under \emph{mixed strategies}~\cite{ZP10}. Given the classical results of modal characterisations
for (non-probabilistic) LTS, probabilistic automata, as well as for (non-probabilistic) game structures,
we investigate if similar correspondence exists for processes and modal logics in the domain of
concurrent games with probabilistic transitions and mixed strategies.  We find that
such a correspondence still holds by adapting a modal logic with nondeterministic distributions
extended from the work of~\cite{DG10}.
As a by-product, we extend that modal logic with fixpoint operators and study its
expressiveness power. Notably, similar to the fixpoint logics in~\cite{Liu2015,Song2019},
the least fixpoint modality in our logic only expresses finite reachability, a property in line with the
original $\mu$-calculus~\cite{Kozen83}, which somehow in the probabilistic setting may not be powerful enough
to express certain reachability properties that require infinite accumulation of moves in a play.

%Our contributions are (1) characterization of PA-Sim by modal logic (2) presenting a game-based fixpoint logic on distributions and show its usefulness.
%\paragraph{Contributions.}
\smallskip\noindent
\textbf{Contributions.}
This paper studies modal characterisation of the probabilistic alternation simulation relation
in probabilistic concurrent game structures, which defines a novel modal logic based on probabilistic distributions. This new logic expresses
a player's power to enforce a property in the next state or distribution. The logic also incorporates both probabilistic and nondeterminstic
features that need to be considered during the two-player interplay. The second contribution is the introduction of a fixpoint logic,
which also characterises the simulation relation, extended from that modal logic. The expressive power of the logic has been illustrated
by examples.

% =============================================
\section{Preliminaries}
\label{sec:preliminaries}
% =============================================

A \emph{discrete probabilistic distribution} $\Delta$ over a set $S$ is a function of type $S\rightarrow[0,1]$,
where $\sum_{s\in S}\Delta(s)=1$.
We write $\dist(S)$ for the set of all such distributions, ranged over by symbols $\Delta,\Theta,\dots$.
Given a set $T\subseteq S$, $\Delta(T)=\sum_{s\in T}\Delta(s)$,
i.e., the probability for the given set $T$.
Given an index set $I$, a list of distributions $\word{\Delta_i}_{i\in I}$ and
a list of values $\word{p_i}_{i\in I}$ where $p_i\in[0,1]$ for all $i\in I$
and $\sum_{i\in I}p_i=1$, we have that $\sum_{i\in I}p_i\Delta_i$ is also a distribution.
If $|I|=2$ we may also write $\Delta_1\oplus_\alpha\Delta_2$ for the distribution
$p_1\Delta_1+p_2\Delta_2$ where $p_1=\alpha$ and $p_2=1-\alpha$.
For $s\in S$, $\pd{s}$ represents a \emph{point (or Dirac) distribution}
satisfying $\pd{s}(s)=1$ and $\pd{s}(t)=0$ for all $t\neq s$.
Given $\Delta\in\dist(S)$, we define $\Supp{\Delta}$ as the set $\set{s\in S\mid\Delta(s)>0}$,
which is the \emph{support}~of~$\Delta$.

We work on a model called probabilistic (concurrent) game structure (PGS)
with two players $\pone$ and $\ptwo$
(though we believe our results can be straightforwardly extended to handle a finite set of players
as in the standard concurrent game structures~\cite{AHK02}).
Each player has complete information about the PGS at any time during a play.
Let $\Prop$ be a finite set of propositions.

\begin{definition}\label{def:PGS}
A probabilistic game structure (PGS) $\G$ is a tuple $\word{S, s_0, L, \A,\step}$, where
\begin{itemize}
\item $S$ is a finite set of states, with $s_0$ the initial state;
\item $L:S\rightarrow 2^{\Prop}$ is the labelling function which
      assigns to each state $s\in S$ a set of propositions true in $s$;
\item $\A=\A_\pone\times\A_\ptwo$ is a finite set of joint actions, where $\A_\pone$ and $\A_\ptwo$ are, respectively,
      the sets of actions for players $\pone$ and $\ptwo$;
\item $\step:S\times\A\rightarrow \dist(S)$ is a transition function.
\end{itemize}
\end{definition}
If in state $s$ player $\pone$ performs action $a_1$ and player $\ptwo$ performs action $a_2$,
then $\step(s,\word{a_1,a_2})$ is the distribution for the next states.
During each step the players choose their next moves simultaneously.
\begin{figure}[!t]
\centering
\includegraphics[height=4.2cm]{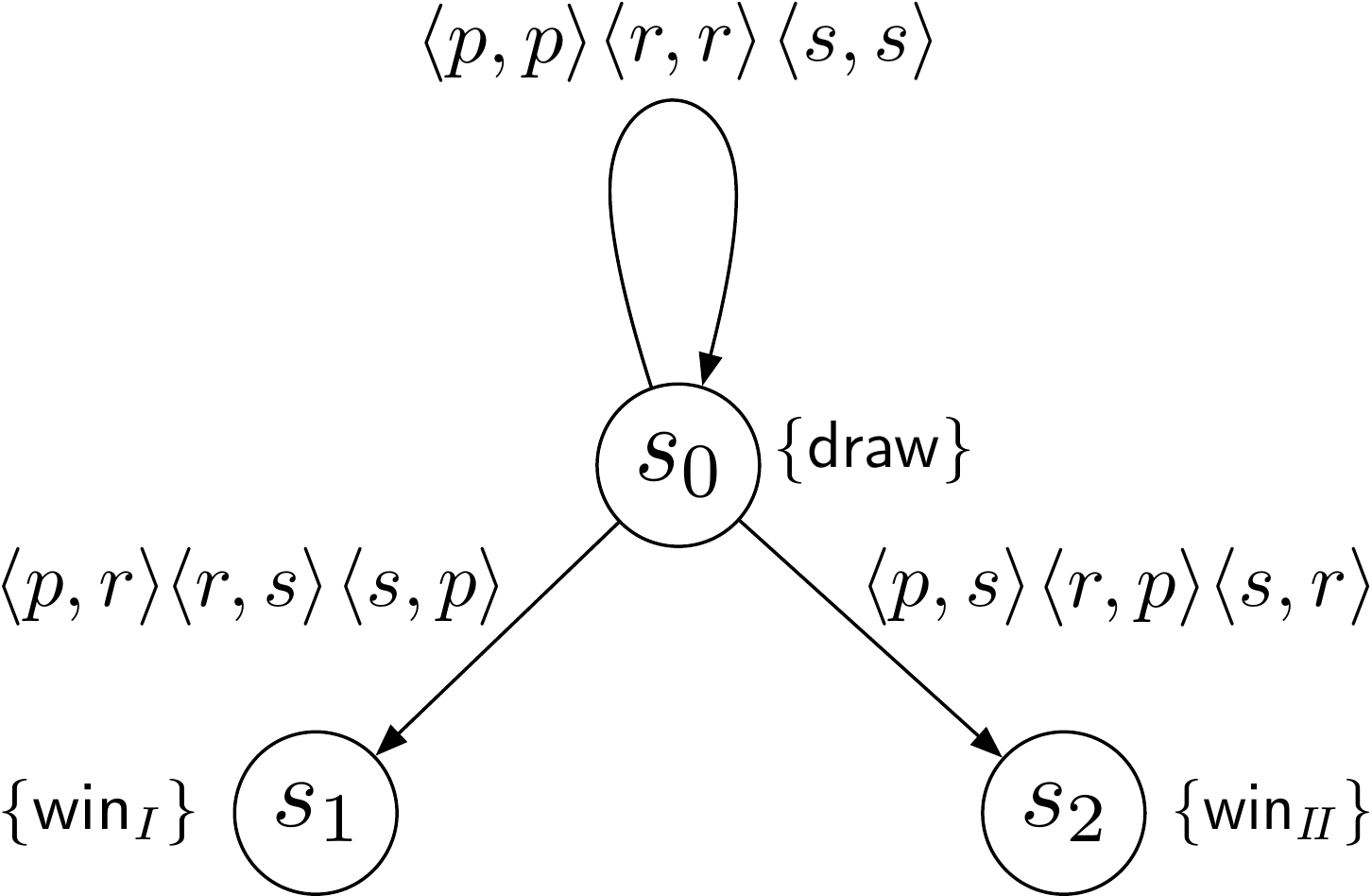}
\caption{The PGS for the repeated rock-paper-scissors game.}\label{fig:pgs}
\end{figure}

\begin{example}
Figure~\ref{fig:pgs} presents the PGS of two players repeatedly playing
the rock-paper-scissors game.\footnote{A similar example was used in~\cite{KNPS18}.
A concurrent stochastic structure (CSG), as defined in~\cite{KNPS18},
in a PGS, in the sense that all the probability distributions involved in the CSG
are point distributions.}
It has three states $s_0$, $s_1$, and $s_2$, with $s_0$ being the initial state.
Each state is labelled with an atomic proposition indicating the result of
a round of the game (which payer wins or there is a draw).
For instance, in state $s_1$ player $\pone$ wins the game.
Actions of the players are $r$ (representing playing rock),
$p$ (representing playing paper), $s$ (representing playing scissors).
%CZ1: concerned about input enabledness, and $q$ (representing restarting the game).
%$q$ can be any action from $\set{r,p,s}$, so that the game restarts to the next round following a \textsf{draw}.
The joint actions $\word{a_1,a_2}$ with $a_1, a_2\in \{r, p, s\}$
are depicted along with the transitions.
The function  $\step$ describes the transitions from state $s$ and state $s'$ as shown in Figure~\ref{fig:pgs}.
%CZ1: and it maps $s$ and a joint action $\word{a_1,a_2}$ to a point distribution on $s'$.
The winning states $s_1$ and $s_2$ are absorbing, i.e., all actions from there make self-transitions, and the game effectively terminates there.
\end{example}

%\begin{figure}[!t]
%\centering
%\includegraphics[height=3.6cm]{rps-pgs.pdf}
%\caption{The PGS for the repeated rock-paper-scissors game.}\label{fig:pgs}
%\end{figure}

%\begin{example}
%Figure~\ref{fig:pgs} presents the PGS of two players repeatedly playing
%the rock-paper-scissors game.\footnote{A similar example was used in~\cite{KNPS18}.
%A concurrent stochastic structure (CSG), as defined in~\cite{KNPS18},
%in a PGS, in the sense that all the probability distributions involved in the CSG
%are point distributions.}
%It has four states $s_0$, $s_1$, $s_2$ and $s_3$, with $s_0$ being the initial state.
%Each state is labelled with an atomic proposition indicating the result of
%a round of the game (which payer wins or there is a draw).
%%For instance, in state $s_1$ player $\pone$ wins the game.
%Actions of the players are $r$ (representing playing rock),
%$p$ (representing playing paper), $s$ (representing playing scissors).
%%CZ1: concerned about input enabledness, and $q$ (representing restarting the game).
%$q$ can be any action from $\set{r,p,s}$, so that the game restarts to the next round following a \textsf{draw}.
%The joint actions $\word{a_1,a_2}$ with $a_1, a_2\in \{r, p, s\}$
%are depicted along with the transitions.
%The function  $\step$ describes the transitions from state $s$ and state $s'$ as shown in Figure~\ref{fig:pgs}.
%%CZ1: and it maps $s$ and a joint action $\word{a_1,a_2}$ to a point distribution on $s'$.
%The winning states are absorbing, i.e., all actions from there make self-transitions, and the game effectively terminates there.
%\end{example}

We define a \emph{mixed action} of player $i$
as a function from states to distributions on $\A_i$, ranged over by $\pi,\pi_1,\sigma\dots$, and write $\Pi_i$ for the set of mixed actions from player $i$.
%Chenyi: may define mixed strategies at an later stage, probably when comparing expressiveness with PAMC and PATL*
%\footnote{Note
%a mixed action of player $\pone$ has type $S\rightarrow\dist(\A_\pone)$,
%which is in reminiscence of (complete) \emph{mixed strategies} of player $\pone$ that has type
%$S^+\rightarrow\dist(\A_\pone)$. Informally, a mixed action only considers player $\pone$'s current step.}
%We use symbols such as $\pi$ and $\sigma$ to range over mixed actions.
In particular, $\pd{a}$ is a \emph{deterministic} mixed action
which always chooses $a$ with probability $1$ in all states.
\begin{example}
In the rock-paper-scissors game (see Figure~\ref{fig:pgs}),
the mixed action with probability $\frac{1}{3}$ for each of the actions ($r$, $p$ and $s$)
is known as the optimal strategy for both players.
\end{example}

We generalise the transition function $\step$ by $\pd\step$ to handle mixed actions.
Given $\pi_1\in\Pi_\pone$ and $\pi_2\in\Pi_\ptwo$, for all $s, t\in S$,
we have $\pd{\step}(s,\word{\pi_1,\pi_2})(t)=\sum_{a_1\in\A_\pone,a_2\in\A_\ptwo}\pi_1(s)(a_1)\cdot\pi_2(s)(a_2)\cdot\step(s,\word{a_1,a_2})(t)$.
We further extend the function $\pd\step$ to handle transitions from distributions to distributions.
Formally, given a distribution $\Delta$, $\pi_1\in\Pi_\pone$ and $\pi_2\in\Pi_\ptwo$, for all $s\in S$, we have
$\pd{\step}(\Delta,\word{\pi_1,\pi_2})(s)=\sum_{t\in\Supp{\Delta}}\Delta(t)\cdot\pd{\step}(t,\word{\pi_1,\pi_2})(s)$.
For better readability, sometimes we write $\Delta\xrightarrow{\pi_1,\pi_2}\Theta$ if $\Theta=\pd\step(\Delta,\word{\pi_1,\pi_2})$.

Let $\leq\ \subseteq S\times S$ be a partial order, define $\leq_{Sm}\subseteq\power(S)\times\power(S)$,
by $P\leq_{Sm}Q$ if for all $t\in Q$ there exists $s\in P$ such that $s\leq t$.
In the literature this definition is known as the `Smyth order'~\cite{Smy78,Karger94} regarding `$\leq$'.

Relations in probabilistic systems usually require a notion of \emph{lifting}~\cite{JL91},
which extends the relations to the domain of distributions.%
\footnote{In a probabilistic system without explicit user interactions, state $s$ is simulated by state $t$
if for every $s\LT{a}\Delta_1$ there exists $t\LT{a}\Delta_2$ such that $\Delta_1$ is simulated by $\Delta_2$.}
Let $S$, $T$ be two sets and $\R\subseteq S\times T$ be a relation,
then $\lift\R\subseteq\dist(S)\times\dist(T)$ is a \emph{lifted relation} defined by $\Delta\,\lift\R\,\Theta$
if there exists a \emph{weight} function $w:S\times T\rightarrow[0,1]$ such that
\begin{itemize}
\item $\sum_{t\in T}w(s, t)=\Delta(s)$ for all $s\in S$,
\item $\sum_{s\in S}w(s, t)=\Theta(t)$ for all $t\in T$,
\item $s\R\, t$ for all $s\in S$ and $t\in T$ with $w(s,t)>0$.
\end{itemize}

The intuition behind the lifting is that each state in the support of one distribution may correspond
to a number of states in the support of the other distribution, and vice versa.
In the following section, we extend the notion of alternating simulation~\cite{AHKV98} to a probabilistic setting in the way of lifting.
%CZ2: lets take this example out for now %%CZ3: we revive this example as more pages are given
The next example is taken from~\cite{Seg95} which shows how to lift a relation.

\begin{example}
In Figure~\ref{fig:lifting}, we have two sets of states $S=\set{s_1,s_2}$ and $T=\set{t_1,t_2,t_3}$,
and a relation $\R=\set{(s_1,t_1),(s_1,t_2),(s_2,t_2),(s_2,t_3)}$.
Suppose $\Delta(s_1)=\Delta(s_2)=\frac{1}{2}$ and
$\Theta(t_1)=\Theta(t_2)=\Theta(t_3)=\frac{1}{3}$, we may establish $\Delta\,\lift\R\,\Theta$. To check this, we define a weight function $w$ by:
$w(s_1,t_1)=\frac{1}{3}$, $w(s_1,t_2)=w(s_2,t_2)=\frac{1}{6}$, and $w(s_2,t_3)=\frac{1}{3}$.
The dotted lines in the graph indicate the allocation of weights
that is required to relate $\Delta$ to $\Theta$~via~$\lift\R$.
%\qed
\end{example}

\begin{figure}[!t]
\centering
\includegraphics[height=4.0cm]{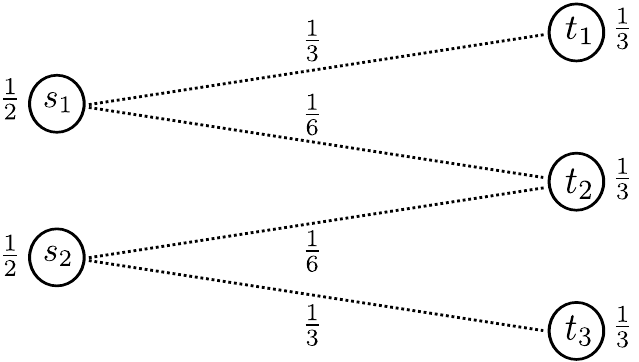}
\caption{An example showing how to lift one relation.}\label{fig:lifting}
\end{figure}
%}%END

We present some properties of lifted relations.
First we show that, by combining distributions that are lift-related with the same weight on both sides,
we get the resulting distributions lift-related.

\begin{lemma}\label{lem:lift-comb}
Let $\R\subseteq S\times S'$ and $\word{p_i}_{i\in I}$ be a list of values satisfying
$\sum_{i\in I}p_i=1$, and $\Delta_i\,\lift\R\,\Delta'_i$ for $\Delta_i\in\dist(S)$ and $\Delta_i'\in\dist(S')$
for all $i$, then $\sum_{i\in I}p_i\Delta_i\,\lift\R\,\sum_{i\in I}p_i\Delta'_i$.
\end{lemma}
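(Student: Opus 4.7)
The plan is to construct a witnessing weight function for the combined distributions by taking the corresponding convex combination of the given witnesses. Specifically, for each $i \in I$, by definition of $\Delta_i \,\lift\R\, \Delta'_i$, there exists a weight function $w_i : S \times S' \to [0,1]$ satisfying the three conditions in the definition of $\lift\R$. I would define
\[
w(s,s') \;=\; \sum_{i \in I} p_i \cdot w_i(s,s')
\]
and verify that $w$ witnesses $\sum_{i\in I} p_i \Delta_i \,\lift\R\, \sum_{i\in I} p_i \Delta'_i$.

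The three verifications are all essentially the same short calculation. For the first marginal, I would compute $\sum_{s' \in S'} w(s,s') = \sum_{s'} \sum_i p_i w_i(s,s') = \sum_i p_i \sum_{s'} w_i(s,s') = \sum_i p_i \Delta_i(s) = \bigl(\sum_i p_i \Delta_i\bigr)(s)$, using the first condition for each $w_i$ and swapping the two finite sums. The second marginal check is entirely symmetric, using the second condition for each $w_i$. Since $p_i \in [0,1]$ with $\sum_i p_i = 1$ and each $w_i(s,s') \in [0,1]$, the values $w(s,s')$ indeed lie in $[0,1]$.

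For the relational condition, suppose $w(s,s') > 0$. Then at least one term $p_i \cdot w_i(s,s')$ must be positive, so some $w_i(s,s') > 0$, and by the third condition for $w_i$ we get $s \,\R\, s'$, which is exactly what is required.

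I do not anticipate any real obstacle: the lemma is essentially the observation that the lifting operation commutes with convex combinations, and the linearity of the marginal equations makes the combination work trivially. The only subtlety worth noting is that $I$ is treated as a finite index set (as in earlier usage in the paper) so that the interchange of sums is unproblematic; if $I$ were required to be infinite one would need a Tonelli-style argument, but nothing in the excerpt suggests this generality is intended.
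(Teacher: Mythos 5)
Your proof is correct and is the standard argument for this lemma (the paper itself states it without proof): the convex combination $w = \sum_{i\in I} p_i w_i$ of the witnessing weight functions is exactly the right witness, the marginal conditions follow by linearity, and the relational condition follows since a positive sum of nonnegative terms has a positive term. Your remark about finiteness of $I$ (or, failing that, absolute convergence of the nonnegative double sum) is the only point of care, and you have handled it appropriately.
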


The following lemma states that, given two related distributions, if we split a distribution
on one side, then there exists a split on the other side to get a one-to-one matching with respect
to the lifted relation.

\begin{lemma}\label{lem:dist-split}
Let $\Delta\in\dist(S)$, $\Delta'\in\dist(S')$, $\R\subseteq S\times S'$, $\word{p_i}_{i\in I}$
be a list of values satisfying $\sum_{i\in I}p_i=1$.
If $\Delta\,\lift\R\,\Delta'$, then
\begin{enumerate}
\item for all lists of distributions $\word{\Delta_i}_{i\in I}$ with $\Delta_i\in\dist(S)$ for all $i\in I$, satisfying
      $\Delta=\sum_{i\in I}p_i\Delta_i$,
      there exists $\word{\Delta_i'}_{i\in I}$ with $\Delta_i'\in\dist(S')$ such that $\Delta'=\sum_{i\in I}p_i\Delta_i'$ and
      $\Delta_i\,\lift\R\,\Delta_i'$ for all $i\in I$;
\item for all lists of distributions $\word{\Delta_i'}_{i\in I}$ with $\Delta_i'\in\dist(S')$ for all $i\in I$, satisfying $\Delta'=\sum_{i\in I}p_i\Delta_i'$, there exists $\word{\Delta_i}_{i\in I}$ with $\Delta_i\in\dist(S)$ such that $\Delta=\sum_{i\in I}p_i\Delta_i$, and $\Delta_i\,\lift\R\,\Delta_i'$ for all $i\in I$.
\end{enumerate}
\end{lemma}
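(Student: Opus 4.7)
The plan is to work directly with the weight-function characterisation of $\lift\R$. From the hypothesis $\Delta\,\lift\R\,\Delta'$ I fix a witnessing weight function $w:S\times S'\to[0,1]$ having row-sums $\Delta$, column-sums $\Delta'$, and support contained in $\R$. For part~(1) I will split $w$ row-wise in proportion to the given decomposition $\Delta=\sum_{i\in I}p_i\Delta_i$, then read off each $\Delta'_i$ from the column-sums of the resulting pieces. Without loss of generality I may assume $p_i>0$ for every $i\in I$, since any zero-weight index imposes no constraint on the corresponding $\Delta_i$ and can be discarded from the list.

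For each $i$, the candidate weight function is
$$
w_i(s,t) \;=\; \begin{cases} \dfrac{\Delta_i(s)}{\Delta(s)}\,w(s,t) & \text{if } \Delta(s)>0, \\[4pt] 0 & \text{otherwise,} \end{cases}
$$
and the associated $\Delta'_i$ is defined by $\Delta'_i(t)=\sum_{s\in S}w_i(s,t)$. When $\Delta(s)=0$ the hypothesis $\Delta=\sum_i p_i\Delta_i$ together with $p_i>0$ forces $\Delta_i(s)=0$, so a short calculation gives $\sum_{t} w_i(s,t)=\Delta_i(s)$ for every $s$. By construction $\sum_{s}w_i(s,t)=\Delta'_i(t)$, and $w_i(s,t)>0$ implies $w(s,t)>0$, hence $s\,\R\,t$; thus $w_i$ is a bona fide weight function witnessing $\Delta_i\,\lift\R\,\Delta'_i$. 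Exchanging the order of summation then yields
$$
\sum_{i\in I}p_i\Delta'_i(t) \;=\; \sum_{s:\,\Delta(s)>0}\frac{\sum_i p_i\Delta_i(s)}{\Delta(s)}\,w(s,t) \;=\; \sum_{s\in S}w(s,t) \;=\; \Delta'(t),
$$
where the last step uses that $\Delta(s)=0$ forces $w(s,t)=0$ via the row-sum condition on $w$. This settles part~(1).

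Part~(2) is entirely dual: one repeats the construction column-wise, defining $w_i(s,t)=\frac{\Delta'_i(t)}{\Delta'(t)}w(s,t)$ when $\Delta'(t)>0$ and $0$ otherwise, setting $\Delta_i(s)=\sum_t w_i(s,t)$, and verifying $\sum_i p_i\Delta_i=\Delta$ by the same exchange-of-summation step. I do not expect a substantive obstacle; the proof is essentially rearrangement of finite sums. The only subtlety to be careful about is the treatment of the zero-probability entries, where the marginal conditions on $w$ must be invoked to rule out spurious contributions before the proportional redistribution, and to guarantee that the resulting $\Delta'_i$ (respectively $\Delta_i$) is a genuine distribution rather than only a subprobability.
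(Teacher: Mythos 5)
Your proof is correct and is exactly the standard proportional-splitting argument one would expect here (the paper itself states this lemma without proof, so there is nothing to diverge from): fix the witnessing weight function, rescale it row-wise by $\Delta_i(s)/\Delta(s)$ for part~(1) and column-wise by $\Delta'_i(t)/\Delta'(t)$ for part~(2), and check the marginals by exchanging finite sums, with the zero-mass entries handled via the marginal conditions on $w$ just as you do. One pedantic caveat on your ``without loss of generality $p_i>0$'' step: for an index with $p_i=0$ the conclusion still demands \emph{some} $\Delta_i'$ with $\Delta_i\,\lift\R\,\Delta_i'$, which can fail if $\Delta_i$ is supported on states outside the domain of $\R$ --- but that is an imprecision in the lemma's statement rather than in your argument, and is irrelevant to how the lemma is used.
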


% =============================================
\section{Probabilistic Alternating Simulation}
\label{sec:simulation}
% =============================================

In concurrency models, simulation is used to relate states with respect to their behaviours.
For example, in a labelled transition system (LTS) $\word{S,A,\rightarrow}$, where $S$ is a set of states,
$A$ is a set of actions and $\rightarrow\ \subseteq S\times A\times S$ is the transition relation,
we say state $s$ is simulated by state $t$, written $s\leq t$,
if for every $s\LT{a}s'$ there exists $t\LT{a}t'$ such that $s'\leq t'$.
In this %way,
coinductive definition, state $t$ is able to simulate state $s$ by performing the same action $a$,
with their destination states still related.
Simulation is a useful tool in abstraction and refinement based verification,
as informally, in the above case, $t$ contains at least as much behaviour as $s$ does.
If the relation `$\leq$' is symmetric, then it is also a bisimulation.

In two-player non-probabilistic game structures (GS),
alternating simulation (A-simulation) is used to describe a player's ability to enforce certain temporal requirements
(regardless of the other player's behaviours)~\cite{AHKV98}.
In this paper we only focus on the ability of player $\pone$ in a two-player game.
Since in a game structure a transition requires the participation of both parties,
fixing player $\pone$'s input leaves a set of possible next states depending on player $\ptwo$'s inputs.
A-simulation is defined in the model of non-probabilistic game structures $\word{S,s_0,\A,L,\step}$,
which has a set of states $S$ with $s_0\in S$ the initial state,
$\A=\A_\pone\times\A_\ptwo$ the set of actions from players $\pone$ and $\ptwo$,
$L$ the labelling function and $\step:S\times\A\rightarrow S$ the transition function.
An A-simulation $\leq^A \subseteq S\times S$ is defined as follows.
Let $s,t\in S$, $s$ is A-simulated by $t$, i.e., $s\leq^A t$, if
\begin{itemize}
\item $L(s)=L(t)$, and
\item for all $a\in \A_\pone$ there exists $a'\in \A_\pone$ such that $\step(s,a)\leq_{Sm}^A\step(t,a')$,
where $\step(s,a)$ is the ``curried'' transition function defined by $\set{t\in S\mid \exists b\in \A_\ptwo:\step(s, \word{a,b})=t}$.
\end{itemize}

Intuitively, on state $t$ action $a'$ enforces a more restrictive set than action $a$ enforces on state $s$,
as shown by the Smyth-ordered relation $\leq^A_{Sm}$: for every $b'\in\A_\ptwo$ there exists $b\in\A_\ptwo$
such that $\step(s,\word{a,b})\leq^A\step(t,\word{a',b'})$.

Zhang and Pang extend A-simulation to probabilistic alternating simulation (PA-simulation) in PGS~\cite{ZP10}.
Their definition requires lifting of the simulation relation to derive a relation on distributions of states.

\begin{definition}\label{def:sim}
Given a PGS $\word{S, s_0, L, \A,\step}$, a \emph{probabilistic alternating simulation} (PA-simulation)
is a relation $\Sim\ \subseteq S\times S$ such that
$s\Sim t$ if
\begin{itemize}
\item $L(s)=L(t)$, and
\item for all $\pi_1\in\Pi_\pone$, there exists $\pi_2\in\Pi_\pone$,
such that $\pd\step(s, \pi_1)\mathrel{\lift{\Sim}}_{Sm}\pd\step(t, \pi_2)$,
where $\pd\step(s,\pi)=\set{\Delta\in\dist(S)\mid\exists\pi'\in\Pi_\ptwo:\pd\step(s, \word{\pi,\pi'})=\Delta}$.
\end{itemize}
\end{definition}
%Similar to the non-probabilistic case, if $\Sim$ is symmetric, then it is also a probabilistic alternating bisimulation.
\noindent
If state $s$ PA-simulates state $t$ and $t$ PA-simulates $s$, we say $s$ and $t$ are
\emph{PA-simulation equivalent}, which is written $s\simeq t$.

% =============================================
\section{A Modal Logic for Probabilistic GS}
\label{sec:logic}
% =============================================
In the literature different modal logics have been introduced to
characterise process semantics at different levels.
Hennessy-Milner logic (HML)~\cite{HM85} provides a classical example
that has been proved to be equivalent to bisimulation semantics in image-finite LTS.
In other words, two states (or processes) satisfy the same set of HML formulas iff they are bisimilar.
For a more comprehensive survey we refer to~\cite{Gla01}.
\commentout{%CZ3:
For probabilistic systems, there are modal logics proposed and
proved to characterise strong and weak probabilistic (bi)simulation in the model of probabilistic automata~\cite{PS07,DG10,HPSWZ11}.}%END

In this section we propose a modal logic for PGS that characterises a player's abilitiy to enforce temporal properties.
We define a new logic $\LP$ in the spirit of the logic of Deng et al.~\cite{DGHMZ07,DG10}.
%,which characterises probabilistic simulation in finite-state probabilistic automata.
%However, we show that a straightforward adaptation of that modal logic is insufficient to characterise PA-simulation in PGS.%
%The second logic $\LPN$ is proposed by introducing a new \emph{nondeterministic} operator into $\LP$,
%and we prove that the resulting logic is both sound and complete to PA-simulation.
The syntax of the logic $\LP$ is presented below.

\[\varphi ::= p\mid \neg p\mid \bigwedge_{i\in I}\varphi_i\mid \bigvee_{i\in I}\varphi_i\mid \strat{\pone}\varphi \mid \bigoplus_{j\in J}p_j\varphi_j \mid \bigoplus_{j\in J}\varphi_j\]

In particular, $p$ is an atomic formula that belongs to the set $\Prop$.
Formula $\bigwedge_{i\in I}\varphi_i$ produces a %countable
conjunction, and $\bigvee_{i\in I}\varphi_i$ produces a %countable
disjunction, both via an index set $I$.
We then derive $\top=\bigwedge_{i\in\emptyset}\varphi_i$ is a formula that is true everywhere,
and $\bot=\bigvee_{i\in\emptyset}\varphi_i$ is a formula false everywhere.
$\strat{\pone}\varphi$  specifies  player $\pone$'s ability to enforce $\varphi$ in the next step.
The probabilistic summation operator $\bigoplus_{j\in J}p_j\varphi_j$ explicitly specifies
that a distribution satisfying such a formula should be split with pre-defined weights,
each part with weight $p_j$ satisfying sub-formula $\varphi_j$.
%A distribution satisfies $[\varphi]_\alpha$ if $\varphi$ is satisfied by a `fragment'  that has weight at least $\alpha$, where $\alpha\in[0,1]$.
For a summation formula with index $J$, we may explicitly write down each component coupled by its weight,
such as in the way of $[p_1,\varphi_1]\oplus[p_2,\varphi_2]\oplus\ldots\oplus[p_{|J|},\varphi_{|J|}]$.
The operator $\bigoplus_{j\in J}\varphi_j$ allows arbitrary linear interpolation among formulas $\varphi_j$.
We only allow negation of formulas on the propositional level.
We use $\LP$ to denote the set of modal formulas defined by the above syntax.

The semantics of $\LP$ is presented as follows.
The interpretation of each formula is defined as a set of distributions of states in a finite PGS $\G=\word{S, s_0, L, \A,\step}$.
%Sometimes we omit $\G$ if that does not cause~confusions.
%Note we consider only single-step strategies for player $X$ as distributions on sets of actions, i.e., of type $S\rightarrow\dist(\A_X)$.

\begin{itemize}
\item $\sem{p}=\set{\Delta\in\dist(S)\mid \forall s\in\Supp{\Delta}: p\in L(s)}$;
\item $\sem{\neg p}=\set{\Delta\in\dist(S)\mid \forall s\in\Supp{\Delta}: p\not\in L(s)}$;
\item $\sem{\bigwedge_{i\in I}\varphi_i}=\bigcap_{i\in I}\sem{\varphi_i}$; $\sem{\bigvee_{i\in I}\varphi_i}=\bigcup_{i\in I}\sem{\varphi_i}$;
\item $\sem{\strat{\pone}\varphi} = \set{\Delta\in\dist(S)\mid \exists \pi_1\in\Pi_\pone: \forall \pi_2\in\Pi_\ptwo: \Delta\xrightarrow{\pi_1,\pi_2}\Theta\mbox{ implies }\Theta\in\sem{\varphi}}$;
\item $\sem{\bigoplus_{j\in J}p_j\varphi_j}=\set{\Delta\in\dist(S)\mid\Delta=\sum_{j\in J}p_j\Delta_j\wedge \forall j\in J: \Delta_j\in\sem{\varphi_j}}$;
\item $\sem{\bigoplus_{j\in J}\varphi_j}=\set{\Delta\in\dist(S)\mid\exists\set{p_j}_{j\in J}: \sum_{j\in J}p_j=1\wedge\Delta=\sum_{j\in J}p_j\Delta_j\wedge \forall j\in J: \Delta_j\in\sem{\varphi_j}}$;
\end{itemize}
Note here we say a distribution $\Delta$ satisfies a propositional formula if the formula holds in every state in the support of $\Delta$. The rest of the semantics is self-explained.

%\paragraph{Remark.}
\smallskip\noindent
\textbf{Remark.}
The probabilistic modal logic proposed by Parma and Segala~\cite{PS07}
and Hermanns et al.~\cite{HPSWZ11} uses a fragment operator $[\varphi]_\alpha$,
such that a distribution $\Delta\models[\varphi]_\alpha$ iff there exists $\Delta_1,\Delta_2\in\dist(S)$
such that $\Delta=\Delta_1\oplus_\alpha\Delta_2$ and $\Delta_1\models\varphi$.
Informally, it states that a fragment of $\Delta$ with weight at least $\alpha$ satisfies $\varphi$.
Note that the summation operator of $\LP$ can be used to encode the fragment operator $[\varphi]_\alpha$,
in the way that $\Delta\models[\varphi]_\alpha$ iff $\Delta\models[\alpha,\varphi]\oplus[1-\alpha,\top]$.
%CZ2:%% CZ3: revive the following
Therefore, a straightforward adaptation of the logic by Parma and Segala~\cite{PS07} and Hermanns et al.~\cite{HPSWZ11} does not yield a more expressive logic than $\LP$.

The semantics of $\bigoplus_{j\in J}\varphi_j$ allows arbitrary linear interpolation among formulas $\varphi_j$.
Similar to the way treating probabilistic summations, one may write down $\bigoplus_{j\in J}\varphi_j$ by $[\varphi_1]\oplus[\varphi_2]\oplus\ldots\oplus[\varphi_{|J|}]$.
%By Lemma~\ref{lem:nondet-prob}, seemingly the nondeterministic summation operator does not possess more discriminative power than probabilistic summations.
The following lemma is straightforward.
%However, A formula in the form of $\strat{}\bigoplus_{j\in J}\varphi_j$ introduces more flexibility into the logic, as it can be used to simulate possible behaviours from player $\ptwo$.
%This is too obvious
\begin{lemma}\label{lem:nondet-prob}
Let $\varphi=\bigoplus_{j\in J}p_j\varphi_j$ and $\varphi'=\bigoplus_{j\in J}\varphi_j$, and $\Delta\in\dist(S)$. We have $\Delta\models\varphi$ implies $\Delta\models\varphi'$.
\end{lemma}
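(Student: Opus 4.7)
The plan is to unpack the two semantic clauses for $\bigoplus_{j\in J}p_j\varphi_j$ and $\bigoplus_{j\in J}\varphi_j$ and observe that the witness decomposition for $\varphi$ already serves as a witness decomposition for $\varphi'$, with the fixed weights $p_j$ playing the role of the existentially quantified weights.

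More concretely, I would start by assuming $\Delta\models\varphi$. By the semantics of $\bigoplus_{j\in J}p_j\varphi_j$, this gives a family of distributions $\word{\Delta_j}_{j\in J}$ with $\Delta_j\in\dist(S)$ such that $\Delta=\sum_{j\in J}p_j\Delta_j$ and $\Delta_j\in\sem{\varphi_j}$ for every $j\in J$. The coefficients $p_j$ are probabilities attached to the formula by syntax, so in particular they form a convex combination, i.e.\ $\sum_{j\in J}p_j=1$.

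Next I would verify that this same data witnesses $\Delta\models\varphi'$. The semantics of $\bigoplus_{j\in J}\varphi_j$ requires the existence of some weights $\{q_j\}_{j\in J}$ with $\sum_{j\in J}q_j=1$ and a decomposition $\Delta=\sum_{j\in J}q_j\Delta_j'$ with $\Delta_j'\in\sem{\varphi_j}$. Choosing $q_j:=p_j$ and $\Delta_j':=\Delta_j$ discharges all three conditions immediately, using what we already extracted from $\Delta\models\varphi$.

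There is no real obstacle here: the lemma simply records that the nondeterministic summation is a semantic weakening of the probabilistic summation, and the proof is just exhibiting the existential witness. The only point worth being explicit about is that the $p_j$ appearing in $\varphi$ are by convention a probability vector (so that $\varphi$ is well-formed), which is exactly what is needed to satisfy the $\sum q_j=1$ side condition in the semantics of $\varphi'$.
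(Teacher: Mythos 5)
Your proof is correct and matches the paper's intent: the paper states this lemma without proof as "straightforward," and the intended argument is exactly the one you give — the fixed weights $p_j$ and components $\Delta_j$ witnessing the probabilistic summation serve directly as the existential witnesses required by the semantics of $\bigoplus_{j\in J}\varphi_j$.
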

%

%as shown in the following example.
\commentout{
\begin{example}\label{ex:LPN-LP}
Let a PGS has state space $S=\set{s_0,s_1,s_2,t_0,t_1,t_2}$ and $\Prop=\set{p,q}$.
The action set $\A_\pone$ for player $\pone$ is $\set{0}$ and $\A_\ptwo=\set{0,1}$.
As depicted in Figure~\ref{fig:LPN}, we label each state $s\in S$ by $L(s)$,
i.e., the set of atomic formulas that holds in $s$.
Each transition is labelled by a real number in $[0,1]$ followed by a joint action in $\A$.
For instance, from $s_0$ if both player $\pone$ and player $\ptwo$ perform $0$,
the system makes a transition to $s_1$.
The states $s_1$, $s_2$, $t_1$, $t_2$ are all absorbing states, i.e., for all $a\in\A$,
$s_1\xrightarrow{a}s_1$, $s_2\xrightarrow{a}s_2$, $t_1\xrightarrow{a}t_1$, and $t_2\xrightarrow{a}t_2$.

One may find that $s_0$ and $t_0$ satisfy the same set of modal formulas in $\LP$.
First they agree on all formulas without the strategy modality ``$\strat{}$''.
In particular, given any formula $\varphi$ of the form $\strat{}[\alpha,\neg p\wedge\neg q]\oplus[1-\alpha, p\wedge q]$ with $\alpha\in[0,1]$,
we have $s_0\not\models\varphi$ and $t_0\not\models\varphi$.
Indeed, from both $s_0$ and $t_0$ player~$\pone$ cannot enforce a distribution formula regarding the truth value of $p$ and $q$.
This is due to the fact that player $\ptwo$ can freely choose his action in $\set{0,1}$.
However, there is a formula in $\LPN$ satisfied by $s_0$ but not $t_0$: $\strat{}\ [\neg p\wedge\neg q]\sqcap [p\wedge q]$.
It can be easily seen that in $t_0$ player $\ptwo$ may enforce the system to reach
either $t_1$ where $p\wedge\neg q$ is satisfied, or $t_2$ where $\neg p\wedge q$ is satisfied,
or any linear interpolation of the two formulas.
The above formula cannot be represented by any formula of the form
$\strat{}[\alpha,\neg p\wedge\neg q]\oplus [1-\alpha,p\wedge q]$ for all $\alpha\in[0,1]$. \qed
\end{example}
\begin{figure}[!t]
\centering
\includegraphics[height=4.2cm]{fig-example1.pdf}
\caption{An example showing $\LPN$ is more discriminative than $\LP$.}\label{fig:LPN}
\end{figure}
}%END Commentout

%\begin{figure}
%\center
%\begin{tikzpicture}[shorten >=1pt,node distance=2.5,on grid,auto,>=stealth,
%      bend angle=35,
%      state/.style=state with output, initial text={},
%      every state/.style={thick,fill=blue!20}]
%\path (1,2.5)  node[draw,shape=circle] (S0){$s_0$};
%      \node[xshift=5mm,yshift=4mm] at (S0) {$\emptyset$};
%\path (0,0)  node[draw,shape=circle] (S1){$s_1$};
%      \node[xshift=3mm,yshift=-5.5mm] at (S1) {$\set{p, q}$};
%\path (2,0)  node[draw,shape=circle] (S2){$s_2$};
%      \node[xshift=3mm,yshift=-5.5mm] at (S2) {$\emptyset$};
%\path (7,2.5)  node[draw,shape=circle] (T0){$t_0$};
%      \node[xshift=5mm,yshift=4mm] at (T0) {$\emptyset$};
%\path (6,0)  node[draw,shape=circle] (T1){$t_1$};
%      \node[xshift=3mm,yshift=-5.5mm] at (T1) {$\set{p}$};
%\path (8,0)  node[draw,shape=circle] (T2){$t_2$};
%      \node[xshift=3mm,yshift=-5.5mm] at (T2) {$\set{q}$};
%
%\path[->] (S0) edge node[xshift=-14mm,yshift=6mm]{$1$, $\word{0,0}$} (S1);
%\path[->] (S0) edge node{$1$, $\word{0,1}$} (S2);
%\path[->] (T0) edge node[xshift=-14mm,yshift=6mm]{$1$, $\word{0,0}$} (T1);
%\path[->] (T0) edge node{$1$, $\word{0,1}$} (T2);
%
%\end{tikzpicture}
%    \caption{An example showing $\LPN$ is more discriminative than $\LP$}\label{fig:LPN}
%\end{figure}

%%% now define new preorders, followed the other counterexample. At the end of the section, we state one of the main results. Proofs will be presented in the next sections.

Similar to most of the literature, given a PGS $\G$,
we define preorders on the set of states in $\G$ with respect to satisfaction of the modal logic $\LP$.

\begin{definition}\label{def:logic-order}
Given states $s,t\in S$, $s\Sim_{\LP}t$ if for all $\varphi\in\LP$, $s\models\varphi$ implies $t\models\varphi$.
If $s\Sim_{\LP}t$ and $t\Sim_{\LP}s$, we write $s\simeq_{\LP}t$.
\end{definition}

%\begin{theorem}\label{thm:logic-order}
%The following subset inclusions may be proper in a PGS:
%\begin{enumerate}
%\item $\Sim_{\LPN}\subseteq\ \Sim_{\LP}$;
%\item $\simeq_{\LPN}\subseteq\ \simeq_{\LP}$.
%\end{enumerate}
%\end{theorem}

%The inclusions can be straightforwardly shown by $\LPN$ being a syntactic extension of $\LP$.
%To show $ \Sim_{\LP}\,\not\subseteq\,\Sim_{\LPN}$ in some PGS,
%we refer to Example~\ref{ex:LPN-LP}, where $s_0\simeq_{\LP}t_0$ but $s_0\not\Sim_{\LPN}t_0$ and $t_0\not\Sim_{\LPN}s_0$.
Now we state the first main result of the paper, and we leave its proof in the following two subsections.

\begin{theorem}\label{thm:modal-char}
Let $\G=\word{S, s_0, L, \A,\step}$ be a PGS, then for all $s,t\in S$, $s\Sim t$ iff $s\Sim_{\LP}t$.
\end{theorem}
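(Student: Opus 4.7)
The proof splits into soundness ($\Sim\subseteq\Sim_\LP$) and completeness ($\Sim_\LP\subseteq\Sim$). For soundness, the plan is to strengthen the claim to the distribution level and prove, by structural induction on $\varphi$, that $\Delta\lift{\Sim}\Theta$ together with $\Delta\models\varphi$ imply $\Theta\models\varphi$; the state-level version follows by instantiating $\Delta=\pd{s}$ and $\Theta=\pd{t}$, which are $\lift{\Sim}$-related precisely when $s\Sim t$. The propositional and Boolean cases are immediate from the weight function, since every $t\in\Supp{\Theta}$ has some $s\in\Supp{\Delta}$ with $s\Sim t$, so labels transfer. The two summation cases invoke Lemma~\ref{lem:dist-split} to transport a witnessing decomposition $\Delta=\sum_j p_j\Delta_j$ to a matched decomposition $\Theta=\sum_j p_j\Theta_j$ with $\Delta_j\lift{\Sim}\Theta_j$, after which the induction hypothesis closes each summand.

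The delicate case is $\strat{\pone}\psi$. Given a witness $\pi_1$ at $\Delta$, I would use the weight function for $\Delta\lift{\Sim}\Theta$ to pair each $t\in\Supp{\Theta}$ with related $s\in\Supp{\Delta}$ and apply Definition~\ref{def:sim} to each pair, obtaining mixed actions $\sigma_{s,t}\in\Pi_\pone$ with $\pd\step(s,\pi_1)\mathrel{\lift{\Sim}}_{Sm}\pd\step(t,\sigma_{s,t})$. Defining the new witness $\pi_1'$ at $t$ as the $w$-weighted average $\sum_s\frac{w(s,t)}{\Theta(t)}\sigma_{s,t}(t)$ makes, for any player-$\ptwo$ response $\pi_2'$ at $\Theta$, the resulting distribution $\pd\step(\Theta,\pi_1',\pi_2')$ a convex combination of the $\pd\step(t,\sigma_{s,t},\pi_2')$'s. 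The Smyth condition supplies matching witnesses $\Delta_{s,t}\in\pd\step(s,\pi_1)$ with $\Delta_{s,t}\lift{\Sim}\pd\step(t,\sigma_{s,t},\pi_2')$; a dual weighted average of the underlying player-$\ptwo$ mixed actions at $s$ yields a single response $\pi_2^*$ realising $\pd\step(\Delta,\pi_1,\pi_2^*)=\sum_{s,t}w(s,t)\Delta_{s,t}$. Lemma~\ref{lem:lift-comb} then relates this to $\pd\step(\Theta,\pi_1',\pi_2')$ via $\lift{\Sim}$, and the induction hypothesis on $\psi$ closes the case.

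For completeness, the plan is to show $\Sim_\LP$ is itself a PA-simulation, whence maximality of $\Sim$ yields $\Sim_\LP\subseteq\Sim$. The label condition $L(s)=L(t)$ follows from the semantics of $p$ and $\neg p$ on Dirac distributions. For the step condition, fix $\pi_1\in\Pi_\pone$ and note that $\pd\step(s,\pi_1)$ is the convex hull of the finite set $\set{d_b=\pd\step(s,\word{\pi_1,\pd{b}})\mid b\in\A_\ptwo}$. Since $S$ is finite, standard Hennessy--Milner reasoning produces a state-level characteristic formula $\chi_u\in\LP$ for each state $u$, obtained as the finite conjunction $\bigwedge_{v:\,u\not\Sim_\LP v}\varphi_{u,v}$ of distinguishing formulas, satisfying $v\models\chi_u$ iff $u\Sim_\LP v$. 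I would then lift to distributions by setting $\chi_\Delta=\bigoplus_{u\in\Supp{\Delta}}\Delta(u)\,\chi_u$ and, using both Lemmas~\ref{lem:lift-comb} and~\ref{lem:dist-split}, show that $\Theta\models\chi_\Delta$ iff $\Delta\lift{\Sim_\LP}\Theta$. The formula $\strat{\pone}\bigoplus_{b\in\A_\ptwo}\chi_{d_b}$ is then satisfied at $s$ via $\pi_1$, hence at $t$ by hypothesis, producing a witness $\pi_2\in\Pi_\pone$; unpacking the $\bigoplus$ semantics and applying Lemma~\ref{lem:lift-comb} delivers, for each $\Theta\in\pd\step(t,\pi_2)$, a matching $\Delta\in\pd\step(s,\pi_1)$ with $\Delta\lift{\Sim_\LP}\Theta$, which is the required Smyth condition. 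The main obstacle is the distributional characterisation $\Theta\models\chi_\Delta\iff\Delta\lift{\Sim_\LP}\Theta$: because the state-level $\chi_u$'s may themselves contain $\strat{\pone}$ and so do not commute pointwise with convex combinations, this biconditional needs its own induction using splitting (Lemma~\ref{lem:dist-split}) in one direction and recombination (Lemma~\ref{lem:lift-comb}) in the other.
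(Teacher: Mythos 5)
Your soundness half is essentially the paper's own argument: a structural induction on $\varphi$ whose only delicate case, $\strat{\pone}\psi$, is discharged by showing that $\lift{\Sim}$ acts as a simulation with respect to mixed actions. Your weighted-average construction of $\pi_1'$ at $\Theta$ and of the matching response $\pi_2^*$ at $\Delta$ is precisely how the paper obtains its Lemma~\ref{lem:lifted:sim} from Lemmas~\ref{lem:split:mix-action}, \ref{lem:split:mix-action-2}, \ref{lem:split:state} and \ref{lem:lift-comb}; the remaining cases match as well. That direction is sound.

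The completeness half takes a genuinely different route, and the step you flag as the main obstacle is a real gap rather than a routine verification. Your argument hinges on the biconditional $\Theta\models\chi_\Delta\iff\Delta\mathrel{\lift{\Sim_{\LP}}}\Theta$ for $\chi_\Delta=\bigoplus_{u}[\Delta(u),\chi_u]$, and in particular on its forward direction: from $\Theta=\sum_u\Delta(u)\Theta_u$ with $\Theta_u\models\chi_u$ you must build a weight function, which forces the conclusion $u\Sim_{\LP}t'$, i.e.\ $\pd{t'}\models\chi_u$, for every $t'\in\Supp{\Theta_u}$. But satisfaction in $\LP$ is a property of distributions that does not descend to the points of the support once $\strat{\pone}$ and $\bigoplus$ are nested: a mixture $\frac{1}{2}\pd{t_1}+\frac{1}{2}\pd{t_2}$ can satisfy a conjunct such as $\strat{\pone}([\frac{1}{2},p]\oplus[\frac{1}{2},q])$ by pooling the one-step capabilities of $t_1$ and $t_2$ while neither point distribution satisfies it alone, so $\Theta_u\models\chi_u$ tells you nothing about the individual support states and the weight function cannot be extracted. (The backward direction is also not free: it needs $\sem{\chi_u}$ to be closed under the relevant convex combinations, which fails for arbitrary conjunctions of distinguishing $\LP$-formulas since these may contain disjunctions.) Invoking Lemma~\ref{lem:dist-split} does not repair this, because that lemma splits lift-related distributions, not formula satisfaction. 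This is exactly why the paper does not use a single master characteristic formula per state: it stratifies both the relation and the logic into approximants $\Sim_n$ and $\Sim^\L_n$, defines level-indexed characteristic formulas $\phi^n_s$ and $\phi^n_\Delta$ of the specific shape $\bigwedge_{\pi\in\dist(\A_\pone)}\strat{\pone}\bigoplus_{\A_\ptwo}\phi^n_{\Delta_{\pi,b}}$, proves $\Sim^\L_n\ \subseteq\ \Sim_n$ by induction on $n$ (the base case being purely propositional, where satisfaction does descend to support states), and finally uses finiteness of $S$ to obtain $\Sim\ =\ \Sim^n$ for some $n$. To complete your proof you would either need to adopt this stratification or establish a pointwise-descent property for your $\chi_u$ that the logic does not in fact have.
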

\begin{proof}
By combining Theorem~\ref{thm:soundness} and Theorem~\ref{thm:completeness} and by treating $s$ and $t$ as point distributions $\pd{s}$ and $\pd{t}$.
\end{proof}

\begin{corollary}\label{cor:modal-char}
Let $\G=\word{S, s_0, L, \A,\step}$ be a PGS, then for all $s,t\in S$, $s\simeq t$ iff $s\simeq_{\LP}t$.
\end{corollary}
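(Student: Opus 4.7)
The plan is to derive Corollary~\ref{cor:modal-char} as an immediate consequence of Theorem~\ref{thm:modal-char} by unfolding the symmetric definitions on both sides of the equivalence. Recall that $s \simeq t$ is defined right after Definition~\ref{def:sim} as the conjunction of $s \Sim t$ and $t \Sim s$, and, analogously, $s \simeq_{\LP} t$ is defined in Definition~\ref{def:logic-order} as the conjunction of $s \Sim_{\LP} t$ and $t \Sim_{\LP} s$. So once we have the two preorders pinned to each other by Theorem~\ref{thm:modal-char}, the corresponding equivalences coincide by a conjunction of equivalences.

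Concretely, I would proceed in two short steps. First, I would apply Theorem~\ref{thm:modal-char} to the pair $(s,t)$ to obtain $s \Sim t \iff s \Sim_{\LP} t$. Second, I would apply the same theorem to the swapped pair $(t,s)$ to obtain $t \Sim s \iff t \Sim_{\LP} s$. Taking the conjunction of these two biconditionals yields $(s \Sim t \wedge t \Sim s) \iff (s \Sim_{\LP} t \wedge t \Sim_{\LP} s)$, which is exactly $s \simeq t \iff s \simeq_{\LP} t$.

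There is no genuine obstacle here; the only thing to be mindful of is that Theorem~\ref{thm:modal-char} is stated for \emph{all} pairs of states in $S$, so instantiating it with the swapped pair is legitimate without requiring any additional symmetry argument on $\Sim$ or $\Sim_{\LP}$ themselves (neither of which is symmetric in general). Consequently the proof reduces to the one-line invocation already written in the paper, and no further machinery beyond Theorem~\ref{thm:modal-char} is needed.
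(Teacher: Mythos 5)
Your proof is correct and is exactly the intended argument: the paper leaves the corollary unproved precisely because it follows by applying Theorem~\ref{thm:modal-char} to both $(s,t)$ and $(t,s)$ and conjoining the two biconditionals. Your remark that no symmetry of $\Sim$ or $\Sim_{\LP}$ is needed, only the universal quantification over state pairs, is the right observation.
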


% =============================================
\subsection{A Soundness Proof}
\label{sec:soundness}
% =============================================

Since the notion of PA-simulation given in Definition~\ref{def:sim} is defined as a relation on states,
in the following we show that the lifted PA-simulation is also a \emph{simulation} on distributions
over the states, which is used as a stepping stone to our soundness result.
Similar to the way of treating distributions,
we also allow linear combination of mixed actions.

\begin{definition}\label{def:action-combine}
Given a list of mixed actions $\word{\pi_i}_{i\in I}$ (of player $\pone$), and $\word{p_i}_{i\in I}$ satisfying $\sum_{i\in I}p_i=1$, $\sum_{i\in I}p_i\pi_i$ is a mixed action defined by $\left(\sum_{i\in I}p_i\pi_i\right)(s)(a)=\sum_{i\in I}p_i\cdot\left(\pi_i(s)(a)\right)$
for all $s\in S$ and $a\in\A_\pone$.
\end{definition}

\begin{lemma}\label{lem:split:mix-action}
Let $s\in S$, $\pi\in\Pi_\pone$ and $\sigma=\sum_{i\in I}p_i\sigma_i\in\Pi_\ptwo$,
then $\pd\step(s,\word{\pi,\sigma})=\sum_{i\in I}p_i\cdot\pd\step(s,\word{\pi,\sigma_i})$.
\end{lemma}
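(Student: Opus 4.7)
The plan is to prove this by pointwise expansion: I would show that both sides of the equation, when evaluated at an arbitrary $t \in S$, yield the same real number. This reduces the claim to a statement about finite sums of nonnegative reals, which is then settled by distributivity and Fubini-style reordering.

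First I would fix $t \in S$ and unfold the left-hand side using the definition of $\pd\step$ on mixed actions:
\[
\pd\step(s,\word{\pi,\sigma})(t) = \sum_{a_1 \in \A_\pone} \sum_{a_2 \in \A_\ptwo} \pi(s)(a_1)\cdot\sigma(s)(a_2)\cdot\step(s,\word{a_1,a_2})(t).
\]
Next I would substitute the hypothesis $\sigma = \sum_{i\in I} p_i \sigma_i$, which by Definition~\ref{def:action-combine} means $\sigma(s)(a_2) = \sum_{i\in I} p_i \cdot \sigma_i(s)(a_2)$ for every $a_2 \in \A_\ptwo$. Pushing this sum outside (finite distributivity) and swapping the order of summation over $i$ and over $(a_1,a_2)$ gives
\[
\sum_{i\in I} p_i \sum_{a_1,a_2} \pi(s)(a_1)\cdot\sigma_i(s)(a_2)\cdot\step(s,\word{a_1,a_2})(t),
\]
and the inner double sum is precisely $\pd\step(s,\word{\pi,\sigma_i})(t)$ by the same definition. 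Since this equals the value of the right-hand side at $t$ (by the definition of linear combinations of distributions in the preliminaries), the two distributions agree at every $t$ and hence coincide.

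I do not anticipate a real obstacle here; the only mild care needed is to confirm that the index sets $\A_\pone$, $\A_\ptwo$, and $I$ are finite (the first two by Definition~\ref{def:PGS}, and $I$ is finite because $\sigma$ is a mixed action taking values in a distribution on the finite set $\A_\ptwo$, so the decomposition is implicitly over a finite index), which legitimises the finite rearrangement of sums. Beyond that, the result is a direct consequence of bilinearity of the transition function in the two players' mixed actions.
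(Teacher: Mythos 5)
Your proposal is correct and follows essentially the same route as the paper's own proof: fix $t\in S$, expand $\pd\step(s,\word{\pi,\sigma})(t)$ by definition, substitute $\sigma(s)(a_2)=\sum_{i\in I}p_i\cdot\sigma_i(s)(a_2)$, and pull the sum over $i$ outside by distributivity to recognise $\sum_{i\in I}p_i\cdot\pd\step(s,\word{\pi,\sigma_i})(t)$. The extra remarks on finiteness of the index sets are harmless but not needed beyond what the paper already assumes.
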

\begin{lemma}\label{lem:split:mix-action-2}
Let $s\in S$, $\pi=\sum_{i\in I}p_i\pi_i\in\Pi_\pone$ and $\sigma\in\Pi_\ptwo$,
then $\pd\step(s,\word{\pi,\sigma})=\sum_{i\in I}p_i\cdot\pd\step(s,\word{\pi_i,\sigma}$.
\end{lemma}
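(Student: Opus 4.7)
The plan is a direct computation that mirrors the proof of Lemma~\ref{lem:split:mix-action}, only now splitting the first player's mixed action rather than the second's. Both sides of the claimed equality are distributions in $\dist(S)$, so it suffices to show they agree pointwise on every $t \in S$.

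I would start by unfolding the left-hand side using the definition of $\pd\step$ on mixed actions:
\[
\pd\step(s,\word{\pi,\sigma})(t)=\sum_{a_1\in\A_\pone,\,a_2\in\A_\ptwo}\pi(s)(a_1)\cdot\sigma(s)(a_2)\cdot\step(s,\word{a_1,a_2})(t).
\]
Then substitute $\pi(s)(a_1)=\sum_{i\in I}p_i\cdot\pi_i(s)(a_1)$ from Definition~\ref{def:action-combine}, pull the outer sum over $i$ out of the sum over $a_1,a_2$ (everything is a finite nonnegative sum, so reordering is unproblematic), and factor out $p_i$. The inner sum then matches exactly the definition of $\pd\step(s,\word{\pi_i,\sigma})(t)$, giving
\[
\pd\step(s,\word{\pi,\sigma})(t)=\sum_{i\in I}p_i\cdot\pd\step(s,\word{\pi_i,\sigma})(t),
\]
which is the pointwise evaluation of the right-hand side.

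There is no real obstacle here: once the definitions are expanded, the argument is a single application of linearity of finite sums in the $\pi$-coordinate, symmetric to the linearity in the $\sigma$-coordinate used for Lemma~\ref{lem:split:mix-action}. The only thing worth a brief remark is that $\sum_{i\in I}p_i\cdot\pd\step(s,\word{\pi_i,\sigma})$ is indeed a legitimate distribution, which follows from the fact that convex combinations of distributions are distributions (used implicitly throughout the preliminaries).
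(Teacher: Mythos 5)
Your proposal is correct and matches the paper's approach: the authors prove the sibling Lemma~\ref{lem:split:mix-action} by exactly this pointwise unfolding of $\pd\step$ followed by reordering the finite sums, and state that the present lemma is proved similarly by linearity in the $\pi$-coordinate instead of the $\sigma$-coordinate. Nothing is missing.
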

\commentout{%%START
\begin{proof}
Let $t\in S$, then
\[
\begin{array}{lllr}
     && \pd\step(s, \word{\pi,\sigma})(t) \\
    &=& \sum_{a_1\in\A_1}\sum_{a_2\in\A_2}\pi(s)(a_1)\cdot\sigma(s)(a_2)\cdot\step(s,\word{a_1,a_2})(t) \\
    &=& \sum_{a_1\in\A_1}\sum_{a_2\in\A_2}\pi(s)(a_1)\cdot\sum_{i\in I}p_i\cdot\sigma_i(s)(a_2)\cdot\step(s,a_1,a_2)(t) \\
    &=& \sum_{i\in I}p_i\cdot\left(\sum_{a_1\in\A_1}\sum_{a_2\in\A_2}\pi(s)(a_1)\cdot\sigma_i(s)(a_2)\cdot\step(s,a_1,a_2)(t)\right) \\
    &=& \sum_{i\in I}p_i\cdot\pd\step(s,\word{\pi,\sigma_i})(t)
\end{array}
\]
\end{proof}
The proof of the following lemma is similar to the above.
}%END
Lemma~\ref{lem:split:mix-action} and Lemma~\ref{lem:split:mix-action-2} show that we can distribute the %linear interpretation of
distributions over actions out of a transition operator.

\begin{lemma}\label{lem:split:state}
Let $\Delta,\Theta\in\dist(S)$ with $\Delta=\sum_{i\in I}p_i\Delta_i$ and
$\Theta=\sum_{i\in I}p_i\Theta_i$, $\pi_1,\pi_2\in\Pi_\pone$ and $\sigma_1,\sigma_2\in\Pi_\ptwo$.
If $\pd\step(\Delta_i,\word{\pi_1,\sigma_1})\ \lift\Sim\ \pd\step(\Theta_i,\word{\pi_2,\sigma_2})$ for all $i\in I$,
then $\pd\step(\Delta,\word{\pi_1,\sigma_1})\ \lift\Sim\ \pd\step(\Theta,\word{\pi_2,\sigma_2})$.
\end{lemma}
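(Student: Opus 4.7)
The plan is to reduce the statement to a direct application of Lemma~\ref{lem:lift-comb} (the combination lemma for lifted relations) by showing that the transition function $\pd\step$ is linear in its first (distribution) argument.

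First, I would establish the linearity fact: for a mixed action pair $\word{\pi,\sigma}$ and a convex combination $\Delta=\sum_{i\in I}p_i\Delta_i$, we have
\[\pd\step(\Delta,\word{\pi,\sigma}) \;=\; \sum_{i\in I}p_i\cdot\pd\step(\Delta_i,\word{\pi,\sigma}).\]
This is immediate from the definition $\pd\step(\Delta,\word{\pi,\sigma})(s)=\sum_{t\in\Supp\Delta}\Delta(t)\cdot\pd\step(t,\word{\pi,\sigma})(s)$: expanding $\Delta(t)=\sum_{i}p_i\Delta_i(t)$ and swapping the order of summation gives the desired identity. This is the analogue, on the state-distribution side, of Lemmas~\ref{lem:split:mix-action} and~\ref{lem:split:mix-action-2}, which handle the action side.

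Applying this twice — once with $\word{\pi_1,\sigma_1}$ to $\Delta=\sum_i p_i\Delta_i$, and once with $\word{\pi_2,\sigma_2}$ to $\Theta=\sum_i p_i\Theta_i$ — yields
\[\pd\step(\Delta,\word{\pi_1,\sigma_1})=\sum_{i\in I}p_i\cdot\pd\step(\Delta_i,\word{\pi_1,\sigma_1}),\qquad
  \pd\step(\Theta,\word{\pi_2,\sigma_2})=\sum_{i\in I}p_i\cdot\pd\step(\Theta_i,\word{\pi_2,\sigma_2}).\]
Now the hypothesis tells us that for each $i$ the $i$-th pair of summands is related by $\lift\Sim$, and the weights $p_i$ on the two sides coincide. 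Lemma~\ref{lem:lift-comb} then delivers $\pd\step(\Delta,\word{\pi_1,\sigma_1})\ \lift\Sim\ \pd\step(\Theta,\word{\pi_2,\sigma_2})$, which is what we want.

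I do not expect a real obstacle here: the proof is a routine composition of the linearity fact above with Lemma~\ref{lem:lift-comb}. The only mild subtlety is to make sure the sums on both sides use the same index set $I$ with identical weights $p_i$, so that Lemma~\ref{lem:lift-comb} applies directly without first having to renormalise or re-pair the components; this is already guaranteed by the hypotheses $\Delta=\sum_i p_i\Delta_i$ and $\Theta=\sum_i p_i\Theta_i$.
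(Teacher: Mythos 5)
Your proof is correct: the linearity of $\pd\step$ in its distribution argument follows immediately from the definition by exchanging the order of summation, and Lemma~\ref{lem:lift-comb} then applies with matching index set and weights. The paper omits an explicit proof of this lemma, but your argument is exactly the one its placement suggests --- it is the distribution-side analogue of Lemmas~\ref{lem:split:mix-action} and~\ref{lem:split:mix-action-2} combined with the lifting-combination lemma, so there is nothing to add.
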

Lemma~\ref{lem:split:state} allows to merge the simulation by component distributions on both sides of the relation.
%
%\begin{proof}
%By Lemma~\ref{lem:lift-comb}, we have
%$\sum_{i\in I}p_i\cdot\pd\step(\Delta_i,\word{\pi_1,\sigma_1})\ \lift\Sim\ \sum_{i\in I}p_i\cdot\pd\step(\Theta_i,\word{\pi_2,\sigma_2})$.
%Then by applying Lemma~\ref{lem:distributive:state} on both sides, we have
%$\pd\step(\Delta,\word{\pi_1,\sigma_1})\ \lift\Sim\ \pd\step(\Theta,\word{\pi_2,\sigma_2})$.
%\end{proof}
%
%\begin{lemma}\label{lem:distributive:state}
%Let $\Delta\in\dist(S)$ with $\Delta=\sum_{i\in I}p_i\Delta_i$, $\pi\in\Pi_\pone$ and $\sigma\in\Pi_\ptwo$,
%then we have $\pd\step(\Delta,\word{\pi,\sigma})=\sum_{i\in I}p_i\cdot\pd\step(\Delta_i,\word{\pi,\sigma})$.
%\end{lemma}
%
%\begin{proof}
%Let $t\in S$, then
%\[
%\begin{array}{lllr}
%     && \sum_{i\in I}p_i\cdot\pd\step(\Delta_i,\word{\pi,\sigma})(t)\\
%    &=& \sum_{i\in I}p_i\cdot\sum_{s\in S}\Delta_i(s)\cdot\pd\step(s,\word{\pi,\sigma})(t) \\
%    &=& \sum_{s\in S}\sum_{i\in I}p_i\Delta_i(s)\cdot\pd\step(s,\word{\pi,\sigma})(t) \\
%    &=& \sum_{s\in S}\Delta(s)\cdot\pd\step(s,\word{\pi,\sigma})(t) \\
%    &=& \pd\step(\Delta,\word{\pi,\sigma})(t)
%\end{array}
%\]
%\end{proof}
%
%Next we introduce an auxiliary lemma to our soundness result.
The next auxiliary lemma states that given a PA-simulation on states, the lifted PA-simulation on distributions of states can be treated
as a \emph{simulation} via mixed actions of player $\pone$ and player $\ptwo$.

\begin{lemma}\label{lem:lifted:sim}
Let $\G=\word{S, s_0, \L, \A,\step}$ be a PGS, and $\Sim$ be a PA-simulation relation for $\G$. Given $\Delta\lift\Sim\Theta$,
for all player $\pone$ mixed actions $\pi_1$, there exists a player $\pone$ mixed action $\pi_2$,
such that $\pd\step(\Delta,\pi_1) \lift\Sim_{Sm} \pd\step(\Theta,\pi_2)$.
\end{lemma}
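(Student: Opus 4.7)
The goal is to establish, given any $\pi_1 \in \Pi_\pone$, the existence of some $\pi_2 \in \Pi_\pone$ satisfying the Smyth condition on sets of distributions; that is, for every $\sigma_2 \in \Pi_\ptwo$ there should exist $\sigma_1 \in \Pi_\ptwo$ with $\pd\step(\Delta,\word{\pi_1,\sigma_1}) \lift\Sim \pd\step(\Theta,\word{\pi_2,\sigma_2})$. The starting ingredient is the weight function $w \colon S \times S \to [0,1]$ witnessing $\Delta \lift\Sim \Theta$: whenever $w(s,t) > 0$ we have $s \Sim t$, and both $\Delta$ and $\Theta$ decompose along the pairs $(s,t)$ with weights $w(s,t)$.

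My plan is to define $\pi_2$ as a $w$-weighted amalgamation of the pointwise witnesses supplied by PA-simulation. For each pair $(s,t)$ with $w(s,t) > 0$, applying Definition~\ref{def:sim} to $s \Sim t$ and $\pi_1$ yields some $\alpha_{s,t} \in \dist(\A_\pone)$ (the only relevant coordinate of the promised mixed action, namely its value at $t$) such that for every $\beta \in \dist(\A_\ptwo)$ there exists $\gamma_{s,t,\beta} \in \dist(\A_\ptwo)$ with
$$\sum_{a,b}\pi_1(s)(a)\,\gamma_{s,t,\beta}(b)\,\step(s,\word{a,b}) \ \lift\Sim \ \sum_{a,b}\alpha_{s,t}(a)\,\beta(b)\,\step(t,\word{a,b}).$$
I then set $\pi_2(t) := \frac{1}{\Theta(t)}\sum_{s} w(s,t)\,\alpha_{s,t}$ for $t \in \Supp{\Theta}$ (and arbitrarily otherwise); this is a valid mixed action since the coefficients sum to $1$.

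Given $\sigma_2$, I dually set $\sigma_1(s) := \frac{1}{\Delta(s)}\sum_t w(s,t)\,\gamma_{s,t,\sigma_2(t)}$ for $s \in \Supp{\Delta}$. A direct expansion using bilinearity of $\pd\step$ in the two action distributions, formalisable via Lemmas~\ref{lem:split:mix-action} and~\ref{lem:split:mix-action-2}, yields
$$\pd\step(\Delta,\word{\pi_1,\sigma_1}) = \sum_{(s,t)} w(s,t)\cdot D_{s,t}, \qquad \pd\step(\Theta,\word{\pi_2,\sigma_2}) = \sum_{(s,t)} w(s,t)\cdot E_{s,t},$$
where $D_{s,t}$ and $E_{s,t}$ are the left- and right-hand sides of the displayed $\lift\Sim$ instantiated with $\beta = \sigma_2(t)$. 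The componentwise relation $D_{s,t} \lift\Sim E_{s,t}$ holds by construction, and Lemma~\ref{lem:lift-comb} assembles these into the required lift-simulation.

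The main obstacle is the quantifier shuffle: $\pi_2$ must be chosen \emph{before} $\sigma_2$ is revealed, so $\pi_2(t)$ must simultaneously serve every $s$ with $w(s,t) > 0$. Averaging the individual $\alpha_{s,t}$ by the conditional weights $w(s,t)/\Theta(t)$ is essentially forced, and the crucial observation that makes the averaging work is that $\step$ is bilinear in the two action distributions while $\lift\Sim$ is closed under matching convex combinations, so averaging upstairs matches averaging downstairs and the pointwise Smyth witnesses glue together consistently.
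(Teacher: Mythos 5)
Your construction is correct and matches the approach the paper sketches: split $\Delta$ and $\Theta$ along the weight function into point-distribution components, invoke the state-level PA-simulation on each related pair $(s,t)$ with $w(s,t)>0$, and reassemble the pointwise witnesses by convex combination using the bilinearity of $\pd\step$ in the action distributions (Lemmas~\ref{lem:split:mix-action} and~\ref{lem:split:mix-action-2}) together with Lemma~\ref{lem:lift-comb}. The paper gives only a one-line sketch of exactly this strategy, and your write-up correctly supplies the details, including the essential point that $\pi_2$ is assembled from the $w$-weighted averages of the $\alpha_{s,t}$ before $\sigma_2$ is revealed.
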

This can be proved by splitting distributions on both sides,
and then %applying the above lemmas to merge
merge related components to form distributions on both sides of the lifted relation,
applying previous lemmas.
%
%\begin{proof}
%By definition there exists a weight function $w$, such that for all states $s, t\in S$,
%we have $w(s,t)>0$ implies $s\Sim t$. For each pair of states $s,t\in S$ with $w(s,t)>0$, we have a mixed action
%$\pi_{s,t}$ such that $\step(s,\pi_1)\ \lift\Sim_{Sm}\ \step(t,\pi_{s,t})$. We construct a mixed action
%$\pi_2=\sum_{s,t\in S:w(s,t)>0}w(s,t)\pi_{s,t}$, and show that
%$\pd\step(\Delta,\pi_1)\ \lift\Sim_{Sm}\ \pd\step(\Theta,\pi_2)$.
%
%Let $\sigma_2\in\Pi_\ptwo$, and we show there exists $\sigma_1\in\Pi_\ptwo$ such that
%$\pd\step(\Delta,\word{\pi_1,\sigma_1})\ \lift\Sim\ \pd\step(\Theta,\word{\pi_2,\sigma_2})$.
%Since for each pair of states $s,t\in S$ with $w(s,t)>0$, we have $\pd\step(s,\pi_1)\ \lift\Sim_{Sm}\ \pd\step(t,\pi_{s,t})$,
%there exists $\sigma_{s,t}\in\Pi_\ptwo$ such that
%$\pd\step(s,\word{\pi_1,\sigma_{s,t}})\ \lift\Sim\ \pd\step(t,\word{\pi_{s,t},\sigma_2})$.
%Define $\sigma_1=\sum_{s,t\in S:w(s,t)>0}w(s,t)\sigma_{s,t}$.
%By Lemma~\ref{lem:lift-comb}, we have
%\[\sum_{s,t\in S:w(s,t)>0}w(s,t)\cdot\pd\step(s,\word{\pi_1,\sigma_{s,t}})\ \lift\Sim\ \sum_{s,t\in S:w(s,t)>0}w(s,t)\cdot\pd\step(t,\word{\pi_{s,t},\sigma_2})\]
%Applying Lemma~\ref{lem:split:mix-action} on the LHS and Lemma~\ref{lem:split:mix-action-2} on the RHS, we have
%$\pd\step(s,\word{\pi_1,\sigma_1})\ \lift\Sim\ \pd\step(t,\word{\pi_2,\sigma_2})$.
%Then by  Lemma~\ref{lem:split:state}, we have
%$\pd\step(\Delta,\word{\pi_1,\sigma_1})\ \lift\Sim\ \pd\step(\Theta,\word{\pi_2,\sigma_2})$.
%\end{proof}

%%%
\begin{theorem}\label{thm:soundness}
Given $\varphi\in\LP$ and $\Delta\lift\Sim\Theta$, then $\Delta\models\varphi$ implies $\Theta\models\varphi$.
\end{theorem}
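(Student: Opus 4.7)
The plan is to prove the theorem by structural induction on $\varphi$, using the machinery set up in the paper so that the work is concentrated in two cases: the modal case $\strat{\pone}\varphi$ and the probabilistic summation cases.

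For the base cases $p$ and $\neg p$, I would unfold the definition of lifting: if $\Delta\lift\Sim\Theta$ via a weight function $w$, then for every $t\in\Supp{\Theta}$ there exists $s\in\Supp{\Delta}$ with $w(s,t)>0$, hence $s\Sim t$, which by Definition~\ref{def:sim} gives $L(s)=L(t)$. So $p\in L(s)$ (respectively $p\notin L(s)$) transfers from $\Delta$ to $\Theta$. The conjunction and disjunction cases are immediate from the induction hypothesis applied pointwise over $I$.

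For the two summation cases $\bigoplus_{j\in J}p_j\varphi_j$ and $\bigoplus_{j\in J}\varphi_j$, I would invoke Lemma~\ref{lem:dist-split}. Suppose $\Delta\models\bigoplus_{j\in J}p_j\varphi_j$, witnessed by a decomposition $\Delta=\sum_{j\in J}p_j\Delta_j$ with $\Delta_j\in\sem{\varphi_j}$. Since $\Delta\lift\Sim\Theta$, part (1) of Lemma~\ref{lem:dist-split} yields $\Theta_j$ with $\Theta=\sum_{j\in J}p_j\Theta_j$ and $\Delta_j\lift\Sim\Theta_j$. The induction hypothesis then gives $\Theta_j\in\sem{\varphi_j}$, so $\Theta\in\sem{\bigoplus_{j\in J}p_j\varphi_j}$. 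The case $\bigoplus_{j\in J}\varphi_j$ is entirely analogous, merely treating the weights $p_j$ as existentially quantified.

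The main obstacle, as expected, is the modal case $\strat{\pone}\varphi$; this is where Lemma~\ref{lem:lifted:sim} earns its keep. Assume $\Delta\models\strat{\pone}\varphi$ with witnessing mixed action $\pi_1$, so every $\Theta'=\pd\step(\Delta,\word{\pi_1,\pi_2'})$ lies in $\sem{\varphi}$. By Lemma~\ref{lem:lifted:sim}, from $\Delta\lift\Sim\Theta$ and $\pi_1$ I obtain a player $\pone$ mixed action $\pi_2$ with $\pd\step(\Delta,\pi_1)\mathrel{\lift{\Sim}}_{Sm}\pd\step(\Theta,\pi_2)$. I claim $\pi_2$ witnesses $\Theta\models\strat{\pone}\varphi$. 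Indeed, given any $\sigma\in\Pi_\ptwo$, the distribution $\Theta''=\pd\step(\Theta,\word{\pi_2,\sigma})$ belongs to $\pd\step(\Theta,\pi_2)$; the Smyth ordering then supplies some $\Theta'\in\pd\step(\Delta,\pi_1)$ with $\Theta'\lift\Sim\Theta''$, and such a $\Theta'$ has the form $\pd\step(\Delta,\word{\pi_1,\sigma'})$ for some $\sigma'\in\Pi_\ptwo$, so $\Theta'\in\sem{\varphi}$. The induction hypothesis applied to $\varphi$ and the pair $\Theta'\lift\Sim\Theta''$ yields $\Theta''\in\sem{\varphi}$, completing the case. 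The subtle point to verify carefully here is that the Smyth quantification in Lemma~\ref{lem:lifted:sim} matches the universal/existential pattern demanded by the semantics of $\strat{\pone}$: the player $\pone$ choice is existential and player $\ptwo$ choice is universal, which is exactly the direction provided by $\mathrel{\lift{\Sim}}_{Sm}$.
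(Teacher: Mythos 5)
Your proposal is correct and follows essentially the same route as the paper's own proof: structural induction on $\varphi$, with Lemma~\ref{lem:dist-split}(1) handling the summation cases and Lemma~\ref{lem:lifted:sim} supplying the matching mixed action $\pi_2$ in the $\strat{\pone}$ case, where the Smyth-order quantifier pattern is unwound exactly as you describe.
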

%%CZ2:
The theorem can be proved by structural induction on $\varphi$.
The base cases when $\varphi=p$ and $\neg p$ are straightforward.
For the \textsc{Induction step}, %%CZ3:due to space limit,
we only show the case of $\strat{\pone}\psi$, %CZ4:.
as the other cases are straightforward.
If $\varphi=\strat{\pone}\psi$, then there exists a player $\pone$ mixed actions $\pi_1$ such that for all
      player $\ptwo$ mixed actions $\sigma_1$, $\Delta\xrightarrow{\pi_1,\sigma_1}\Delta'$ and $\Delta'\models\psi$.
      By Lemma~\ref{lem:lifted:sim}, there exists a player $\pone$ mixed action $\pi_2$  such that
      $\pd\step(\Delta,\pi_1)\,\lift\Sim_{Sm}\pd\step(\Theta,\pi_2)$. Therefore, for all player $\ptwo$ mixed actions
      $\sigma_2$ there exists a player $\ptwo$ strategy $\sigma_1'$, such that $\Delta\xrightarrow{\pi_1,\sigma_1'}\Delta''$,
      $\Theta\xrightarrow{\pi_2,\sigma_2}\Theta'$, and $\Delta''\lift\Sim\Theta'$. Since $\Delta''\models\psi$,
      by I.H., $\Theta'\models\psi$. This shows that $\pi_2$ is the player $\pone$ mixed action for $\Theta\models\strat{\pone}\psi$.

\commentout{%CZ2
\begin{proof}
By structural induction on the formula $\varphi$.
\vspace{2mm}\noindent
\textsc{Base case:}
If $\varphi=p$, then for all $s\in\Supp{\Delta}$, $p\in L(s)$.
As $\Delta\lift\Sim\Theta$, by definition of lifting, there exists a weight function $w$,
such that for every $t\in\Supp{\Theta}$ there is $s'\in\Supp{\Delta}$ such that $w(s',t)>0$ and $s'\Sim t$. Then $p\in L(t)$
by Definition~\ref{def:sim}. Therefore, $\Theta\models p$. Similarly, the case of $\neg p$ can be proved.

\vspace{2mm}\noindent
\textsc{Induction step:} We have the following cases.
%%%We only show the cases of $\bigoplus_{i\in I}\varphi_i$ and $\strat{\pone}\psi$ due to lack of space.
\begin{itemize}
\item If $\varphi=\bigwedge_{i\in I}\varphi_i$, then for every $\varphi_i$, we have $\Delta\models\varphi_i$.
      By I.H., we have $\Theta\models\varphi_i$. Therefore, $\Theta\models\bigwedge_{i\in I}\varphi_i$.
\item The case of $\varphi=\bigvee_{i\in I}\varphi_i$ is similar to the above case.
\item If $\varphi=\bigoplus_{i\in I}p_i\varphi_i$, then by definition, there exist $\word{p_i, \Delta_i}_{i\in I}$
      such that $\Delta=\sum_{i\in I}p_i\Delta_i$ and $\Delta_i\models\varphi_i$ for all $i\in I$. By Lemma~\ref{lem:dist-split}(1),
      there exist $\word{\Theta_i}_{i\in I}$ such that $\Delta_i\lift\Sim\Theta_i$ for all $i$. Then by I.H.,
      $\Theta_i\models\varphi_i$ for all $i$. Therefore, $\Theta\models\sum_{i\in I}p_i\varphi_i$.

\item If $\varphi=\bigoplus_{i\in I}\varphi_i$, then by definition, there exists $\word{p_i, \Delta_i}_{i\in I}$ such that
      $\Delta\models\sum_{i\in I}p_i\varphi_i$. Then similar to the above case, $\Theta\models\sum_{i\in I}p_i\varphi_i$.
      Therefore, $\Theta\models\bigoplus_{i\in I}\varphi_i$.

\item If $\varphi=\strat{\pone}\psi$, then there exists a player $\pone$ mixed actions $\pi_1$ such that for all
      player $\ptwo$ mixed actions $\sigma_1$, $\Delta\xrightarrow{\pi_1,\sigma_1}\Delta'$ and $\Delta'\models\psi$.
      By Lemma~\ref{lem:lifted:sim}, there exists a player $\pone$ mixed action $\pi_2$  such that
      $\pd\step(\Delta,\pi_1)\,\lift\Sim_{Sm}\pd\step(\Theta,\pi_2)$. Therefore, for all player $\ptwo$ mixed actions
      $\sigma_2$ there exists a player $\ptwo$ strategy $\sigma_1'$, such that $\Delta\xrightarrow{\pi_1,\sigma_1'}\Delta''$,
      $\Theta\xrightarrow{\pi_2,\sigma_2}\Theta'$, and $\Delta''\lift\Sim\Theta'$. Since $\Delta''\models\psi$,
      by I.H., $\Theta'\models\psi$. Then $\pi_2$ is the player $\pone$ mixed action showing $\Theta\models\strat{\pone}\psi$.
\end{itemize}
\end{proof}
}%END

% =============================================
\subsection{A Completeness Proof}
\label{sec:completeness}
% =============================================
The completeness is proved by approximating the relations $\Sim$ and $\Sim_{\LP}$.
For PA-simulation we construct relations $\Sim^n$ for $n\in\Nat$,
where $n$ denotes the numbers of steps that are required to check for a state to simulate another.
(Intuitively, the more steps to check, the harder for a pair of states to satisfy the relation.)
Similarly we define $\Sim^{\L}_n$, restricting to formulas in $\LP$ with \emph{size} up to $n$.
Then we prove that the relation $\Sim^{\L}_n$ is contained in $\Sim^n$ for all $n\in\Nat$.

%We first define structured subsets of $\LP$  as follows.
\begin{definition}\label{def:formula-n}
Let $\LP_0$ be the set of formulas constructed by using only $p$, $\neg p$ and $\bigwedge_{i\in I}\varphi_i$.
For $n\in\Nat$, a formula $\varphi\in\LP_{n+1}$ if either $\varphi\in\LP_n$ or $\varphi$ is a conjunction of %$\top$, $p$, $\neg p$ and
formulas of the form $\strat{\pone}\bigoplus_{i\in I}\sum_{j\in J}p_j\varphi_{i,j}$, where each $\varphi_{i,j}\in\LP_n$.
\end{definition}
Intuitively, formulas in $\LP_n$ require $n$ steps of transitions (for player~$\pone$) to enforce.
Given states $s,t\in S$, we write $s\Sim^\L_n t$, if for all $\varphi\in\LP_n$, $\pd{s}\models\varphi$ implies $\pd{t}\models\varphi$.
Similarly we define approximating relations for PA-simulation.

\begin{definition}\label{def:simulation-n}
Given $s,t\in S$, $s\Sim_0t$ if $L(s)= L(t)$.
For $n\in\Nat$, $s\Sim_{n+1}t$ if $s\Sim_nt$, and for all player $\pone$ mixed actions $\pi_1$ there exists a player $\pone$ mixed action $\pi_2$,
%such that for all player $\ptwo$ strategies $\sigma_2$
%there exists a player $\ptwo$ strategy $\sigma_1$
%such that $\lift{s}\LT{\pi_1,\sigma_1}\Delta$, $\lift{t}\LT{\pi_2,\sigma_2}\Theta$ and $\Delta\lift\Sim_n\Theta$.
such that $\pd\step(s,\pi_1)\ (\lift\Sim_n)_{Sm} \pd\step(t,\pi_2)$.
\end{definition}

Before starting the completeness proof, we define formulas that characterise properties of the game states.
Let $s\in S$, the $0$-characteristic formula for
$s$ is $\phi_s^0=\bigwedge\set{p\mid p\in L(s)}\wedge\bigwedge\set{\neg q\mid
q\in \Prop\setminus L(s)}$. Plainly, the level $0$-characterisation considers
only propositional formulas. For a distribution, we specify the characteristic
formulas for the states in its support proportional to weights. The $0$-characteristic formula $\phi^0_\Delta$ for
distribution $\Delta$ is $\sum_{t\in\Supp{\Delta}}\Delta(t)\cdot\phi_t^0$. Given
all $n$-characteristic formulas defined, the ($n+1$)-characteristic formula
$\phi_s^{n+1}$ for state $s$ is $\bigwedge_{\pi\in\dist(\A_\pone)}\strat{\pone}\bigoplus_{\A_\ptwo}\phi^n_{\Delta_{\pi,b}}$,
where $\pd{s}\xrightarrow{\pi,\pd{b}}\Delta_{\pi,b}$. Similarly, an $n$-characteristic formula
$\phi^{n+1}_{\Delta}$ for distribution $\Delta$ is $\sum_{t\in\Supp{\Delta}}\Delta(t)\cdot\phi_{t}^{n+1}$.
%, which includes the $n$-characteristic formulas of all states in the support of $\Delta_{a,b}$.

Obviously every state or distribution satisfies its own characteristic formula, and
the following lemma can be proved straightforwardly by induction on $n\in\Nat$.

\begin{lemma}\label{lem:sound:characteristic}
For all $\Delta\in\dist(S)$, $\Delta\models\phi^n_\Delta$ for all $n\in\Nat$.
\end{lemma}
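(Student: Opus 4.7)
The plan is to prove the lemma by induction on $n \in \Nat$, reducing in both cases to the special case of point distributions. The key observation, used for both base and inductive steps, is that $\phi^n_\Delta = \sum_{t \in \Supp{\Delta}} \Delta(t) \cdot \phi^n_t$ is itself a probabilistic summation over the Dirac distributions $\pd{t}$. So, by the semantics of $\bigoplus_{j} p_j \varphi_j$, it suffices to exhibit a decomposition $\Delta = \sum_{t \in \Supp{\Delta}} \Delta(t) \cdot \pd{t}$ in which each $\pd{t}$ satisfies $\phi^n_t$. The decomposition itself is the obvious one, so the task in each case reduces to showing $\pd{t} \models \phi^n_t$ for every state $t$.

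For the base case $n=0$, this is immediate: $\pd{t}$ has support $\{t\}$, and $\phi^0_t$ is the propositional conjunction determined precisely by $L(t)$, so $\pd{t} \models \phi^0_t$ follows directly from the semantics of $p$ and $\neg p$ over distributions. For the inductive step, assuming the statement at level $n$, we must show $\pd{s} \models \phi^{n+1}_s$ for each state $s$, which unfolds to the requirement that for every $\pi \in \dist(\A_\pone)$, we have $\pd{s} \models \strat{\pone} \bigoplus_{b \in \A_\ptwo} \phi^n_{\Delta_{\pi,b}}$. I would take $\pi$ itself as the witnessing mixed action for player $\pone$. Then, for any player $\ptwo$ mixed action $\sigma$, Lemma~\ref{lem:split:mix-action} (decomposing $\sigma$ at $s$ into the convex combination $\sum_b \sigma(s)(b) \cdot \pd{b}$) gives $\pd\step(s, \word{\pi, \sigma}) = \sum_{b \in \A_\ptwo} \sigma(s)(b) \cdot \Delta_{\pi,b}$.

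It remains to check that this resulting distribution satisfies $\bigoplus_{b} \phi^n_{\Delta_{\pi,b}}$. Using the weights $p_b = \sigma(s)(b)$ and the component distributions $\Delta_{\pi,b}$ in the semantics of the nondeterministic summation, this reduces to showing $\Delta_{\pi,b} \models \phi^n_{\Delta_{\pi,b}}$ for each $b$, which is exactly the induction hypothesis. The only mild subtlety is keeping straight the two levels of decomposition (over $\Supp{\Delta}$ on the outside and over $\A_\ptwo$ on the inside), but both are routine once one observes that the $\bigoplus$ construct is flexible enough to absorb arbitrary player $\ptwo$ mixed actions via Lemma~\ref{lem:split:mix-action}; I do not expect any real obstacle.
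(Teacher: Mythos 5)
Your proof is correct and takes essentially the same approach as the paper, which only remarks that the lemma "can be proved straightforwardly by induction on $n\in\Nat$" without spelling out the details. Your filling-in — reducing general distributions to the Dirac case via the semantics of $\bigoplus_j p_j\varphi_j$, witnessing $\strat{\pone}$ with the mixed action induced by $\pi$, and splitting the player $\ptwo$ action at $s$ via Lemma~\ref{lem:split:mix-action} so that the weights $\sigma(s)(b)$ instantiate the nondeterministic summation — is exactly the intended induction.
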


\begin{lemma}\label{lem:comp:approx}
For all states $s,t\in S$ and $n\in\Nat$, $s\Sim^\L_nt$ implies
$s\Sim_n t$.
\end{lemma}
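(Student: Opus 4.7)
The plan is to proceed by induction on $n$. The base case $n=0$ is immediate: the propositional characteristic formula $\phi_s^0 = \bigwedge_{p \in L(s)} p \wedge \bigwedge_{q \in \Prop \setminus L(s)} \neg q$ lies in $\LP_0$ and is satisfied by $\pd{s}$, hence by $\pd{t}$ from $s \Sim^{\L}_0 t$; this forces $L(s) = L(t)$ and so $s \Sim_0 t$.

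For the inductive step, I assume the claim holds for $n$ and suppose $s \Sim^{\L}_{n+1} t$. Since $\LP_n \subseteq \LP_{n+1}$, I obtain $s \Sim^{\L}_n t$ and therefore $s \Sim_n t$ by the induction hypothesis. For the transition clause of Definition~\ref{def:simulation-n}, I fix $\pi_1 \in \Pi_\pone$ and form the $\LP_{n+1}$ formula
\[
\varphi \;=\; \strat{\pone} \bigoplus_{b \in \A_\ptwo} \phi^n_{\Delta_{\pi_1, b}}, \qquad \Delta_{\pi_1, b} \;=\; \pd\step(s, \pi_1, \pd{b}).
\]
By Lemma~\ref{lem:sound:characteristic}, $\pi_1$ witnesses $\pd{s} \models \varphi$: any player-$\ptwo$ response $\sigma$ produces $\pd\step(s, \pi_1, \sigma) = \sum_b \sigma(s)(b)\,\Delta_{\pi_1, b}$, which satisfies $\bigoplus_b \phi^n_{\Delta_{\pi_1, b}}$ with weights $\sigma(s)(b)$. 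Hence $\pd{t} \models \varphi$ via some witness $\pi_2$, my candidate mixed action.

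To check $\pd\step(s, \pi_1)\,(\lift\Sim_n)_{Sm}\,\pd\step(t, \pi_2)$, I take any $\sigma_2$ and set $\Theta = \pd\step(t, \pi_2, \sigma_2)$. Then $\Theta \models \bigoplus_b \phi^n_{\Delta_{\pi_1, b}}$ yields a decomposition $\Theta = \sum_b q_b \Theta_b$ with $\Theta_b \models \phi^n_{\Delta_{\pi_1, b}}$. Choosing $\sigma_1$ with $\sigma_1(s)(b) = q_b$ gives $\Delta = \sum_b q_b \Delta_{\pi_1, b} \in \pd\step(s, \pi_1)$. Unfolding $\phi^n_{\Delta_{\pi_1, b}} = \sum_u \Delta_{\pi_1, b}(u)\,\phi^n_u$ further refines $\Theta_b = \sum_u \Delta_{\pi_1, b}(u)\,\Theta_{b,u}$ with $\Theta_{b,u} \models \phi^n_u$. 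Two successive applications of Lemma~\ref{lem:lift-comb} then reduce $\Delta \lift\Sim_n \Theta$ to showing $\pd{u} \lift\Sim_n \Theta_{b,u}$ for each $u$, equivalently $u \Sim_n v$ for every $v \in \Supp{\Theta_{b,u}}$.

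The hard part will be an accompanying sub-lemma: $\Theta' \models \phi^n_u$ implies $u \Sim^{\L}_n v$ for every $v \in \Supp{\Theta'}$, whereupon $u \Sim_n v$ follows by the outer induction hypothesis. This expressiveness property of characteristic formulas is delicate because $\strat{\pone}$ is not a pointwise operator on distributions: a mixed-action witness enforcing a property on an aggregate $\Theta'$ need not enforce it on each Dirac $\pd{v}$ in its support. I plan to establish this sub-lemma by a parallel induction on $n$, exploiting the rigid conjunction-of-$\strat{\pone}$-$\bigoplus$ shape of $\phi^n_u$ and invoking Lemma~\ref{lem:dist-split} to transfer the aggregate probabilistic decomposition witnesses into pointwise decompositions compatible with the required lifting.
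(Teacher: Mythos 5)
Your argument tracks the paper's proof almost step for step---the same characteristic formulas, the same probe formula $\strat{\pone}\bigoplus_{b\in\A_\ptwo}\phi^n_{\Delta_{\pi_1,b}}$, the same extraction of a witness $\pi_2$ from $\pd{t}\models\varphi$, and the same construction of $\sigma_1$ from the weights of the $\bigoplus$-decomposition of $\Theta$---right up to the point where everything has been reduced to the claim that $\Theta_{b,u}\models\phi^n_u$ forces $\pd{u}\mathrel{\lift\Sim_n}\Theta_{b,u}$. There you stop and defer to a sub-lemma that you only ``plan to establish''. That is a genuine gap, and it is where essentially all of the mathematical content of the induction sits; everything before it is bookkeeping with Lemmas~\ref{lem:lift-comb} and~\ref{lem:split:mix-action}. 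Worse, the sub-lemma as you phrase it ($\Theta'\models\phi^n_u$ implies $u\Sim^{\L}_n v$ for every $v\in\Supp{\Theta'}$) is both stronger than what you need and harder than what is available: to conclude that $v$ satisfies \emph{every} $\LP_n$ formula that $u$ satisfies, starting from the single fact that an aggregate $\Theta'$ satisfies $\phi^n_u$, you would need---on top of the support-decomposition problem you correctly worry about---a soundness argument for $\Sim_n$ with respect to $\LP_n$, i.e.\ you would be smuggling a level-$n$ version of Theorem~\ref{thm:soundness} into the completeness proof.

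The paper closes this step differently: the statement it carries through the induction on $n$ is, in effect, the distribution-level implication ``$\Theta'\models\phi^n_{\Delta'}$ implies $\Delta'\mathrel{\lift\Sim_n}\Theta'$'', of which the lemma is the instance $\Delta'=\pd{s}$, $\Theta'=\pd{t}$. With that induction hypothesis the paper simply applies it to $\Theta_c\models\phi^n_{\Delta_{\pi,c}}$ and obtains $\Delta_{\pi,c}\mathrel{\lift\Sim_n}\Theta_c$ in one stroke, with no excursion through $\Sim^{\L}_n$ and no need to argue that satisfaction of a $\strat{\pone}\bigoplus$-formula by $\Theta_c$ distributes over the states in $\Supp{\Theta_c}$ (which, as you note, is false for such formulas in general). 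So the repair is not a new ``parallel induction'' but a reformulation: state your sub-lemma with $\Sim_n$ in place of $\Sim^{\L}_n$, promote it (for arbitrary $\Delta'$, or at least for point distributions $\pd{u}$ on the left) to the statement being proved by induction, and then verify that your inductive-step argument establishes that stronger statement at level $n+1$, with a general distribution in place of $\pd{t}$ on the right-hand side. As submitted, the proof is incomplete at its decisive step.
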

\begin{proof}
For each $n\in\Nat$, we have $\pd{s}\models\phi^n_s$ by Lemma~\ref{lem:sound:characteristic}.
Let  $s\Sim^\L_n t$, then $\pd{t}\models\phi^n_s$.
We proceed by induction on the level of approximation $n$ to show that $s\Sim_n t$.

First we show that the state-based relation can be naturally carried over to distributions.
Suppose for all $s,t\in S$, $s\Sim^\L_n t$ implies
$\pd{t}\models\phi^n_s$. Given two distributions $\Delta,\Theta\in\dist(S)$ and let
$\Delta\ \lift\Sim^\L_n\ \Theta$. Then there exists a weight function $w$, such that
$\Delta=\sum_{s\in\Supp{\Delta},t\in\Supp{\Theta}:w(s,t)>0}w(s,t)\cdot\pd{s}$, and
$\Theta=\sum_{s\in\Supp{\Delta},t\in\Supp{\Theta}:w(s,t)>0}w(s,t)\cdot\pd{t}$, and
$s\Sim^\L_n t$ for all $w(s,t)>0$. Since $\phi^n_{\Delta}$ can be written as
$\sum_{s\in\Supp{\Delta},t\in\Supp{\Theta}:w(s,t)>0}w(s,t)\cdot\phi^n_s$, we must have
$\Theta\models\phi^n_{\Delta}$ as well.

\vspace{2mm}\noindent
\textsc{Base case:} Trivial.

\vspace{2mm}\noindent
\textsc{Induction step:} Let $\pd{t}\models\phi^{n+1}_s$, where
$\phi^{n+1}_s=\bigwedge_{\pi\in\dist(\A_\pone)}\strat{\pone}\bigoplus_{\A_\ptwo}\phi^n_{\Delta_{\pi,b}}$.
Then for
each $\pi\in\dist(\A_\pone)$, $\pd{t}\models\strat{\pone}\bigoplus_{b\in
\A_\ptwo}\phi^n_{\Delta_{\pi,b}}$. By definition there exists a player $\pone$
mixed action $\pi'$, such that for every player $\ptwo$ mixed action $\sigma$, we have
$\pd{t}\xrightarrow{\pi',\sigma}\Theta$ and $\Theta\models\bigoplus_{b\in
\A_\ptwo}\phi^n_{\Delta_{\pi,b}}$.
We need to show that $\pd\step(s, \pi)\mathrel{(\lift{\Sim}^\L_n)}_{Sm}\pd\step(t, \pi')$.
%We show that there exists a player $\ptwo$
%strategy $\sigma$ such that $\lift{s}\xrightarrow{\lift{a},\sigma}\Delta$ such that
%$\Delta\lift\Sim^n\Theta$.
%(By Lemma~\ref{lem:simplify1} in Section~\ref{sec:alg},
%it suffices to check that each $a\in\A_\pone$ can be simulated by a player $\pone$ mixed action in $\Pi_\pone$.
%Also
%By Lemma~\ref{lem:simplify2},

It suffices to check each $b\in\A_\ptwo$ from $t$ can be followed
by a player $\ptwo$ mixed action from $s$ to establish such a simulation.
Let $b'$ be a player $\ptwo$ action, and $\pd{t}\xrightarrow{\pi',\pd{b'}}\Theta$.
Since $\Theta\models\bigoplus_{b\in \A_\ptwo}\phi^n_{\Delta_{\pi,b}}$, there
exists a list of probability values $\word{p_{c}}_{c\in\A_\ptwo}$, such that
$\Theta\models\sum_{c\in\A_\ptwo}p_c\phi^n_{\Delta_{\pi,c}}$. Then by definition, we have
$\Theta=\sum_{c\in\A_\ptwo}p_c\cdot\Theta_c$, $\sum_{c\in\A_\ptwo}p_c = 1$ and
$\Theta_c\models\phi^n_{\Delta_{\pi,c}}$ for all $c\in\A_\ptwo$.
In state $s$, we define a player $\ptwo$ mixed action $\sigma$ satisfying $\sigma(s)(c)=p_c$ for
all $c\in\A_\ptwo$. Then by Lemma~\ref{lem:split:mix-action}, we have
$\pd\step(s,\word{\pi,\sigma})=\sum_{c\in\A_\ptwo}p_c\cdot\step(s,\word{\pi,c})=\Delta_{\pi,\sigma}$ for all $c\in\A_\ptwo$.
By Lemma~\ref{lem:lift-comb}, it suffices to show $\step(s,\word{\pi,c})\lift\Sim^n\Theta_c$ for all $c\in\A_\ptwo$.
%
%we have $\lift{s}\LT{\lift{a},\sigma}\Delta$ where
%$\Delta=\sum_{c\in\A_\ptwo}p_c\Delta_{a,c}$ and
%$\lift{s}\LT{\lift{a},\lift{c}}\Delta_{a,c}$ for all $c\in\A_\ptwo$. Now the
%$n$-characteristic formula for $\Delta_{a,b}$ is $\phi^n_{a,b}$. By splitting
%$\Delta_{a,b}$, $\Theta_{a,b}$ to the state level (need to extend by
%constructing a weight function, possibly by introducing a new lemma)
Since $\Theta_c\models\phi^n_{\Delta_{\pi,c}}$, we have $\Delta_{\pi,c}\lift\Sim^n\Theta_c$ by I.H..
%Therefore $\Delta\lift\Sim^n\Theta$ by Lemma~\ref{lem:lifting}.
%\qed
\end{proof}

%\fbox{Need intuition or figures for the completeness proof.}
Intuitively, by fixing a mixed strategy from player $\pone$, a transition in the PGS is bounded by deterministic actions
from player $\ptwo$, as mimicked in the structure of the characteristic formulas. The way of showing satisfaction of a characteristic
formula thus mimic the PA-simulation in the proof of Lemma~\ref{lem:comp:approx}.

\begin{theorem}\label{thm:completeness}
For all $s,t\in S$, $s\Sim^\L t$ implies $s\Sim t$.
\end{theorem}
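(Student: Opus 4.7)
The plan is to derive the theorem from Lemma~\ref{lem:comp:approx} via an approximation argument that exploits the finiteness of $S$. First I would verify by a routine structural check on Definition~\ref{def:formula-n} that $\LP_n\subseteq\LP$ for every $n\in\Nat$: $\LP_0$ is built from $p$, $\neg p$, $\bigwedge$, all of which are operators of $\LP$, and the inductive clause only adds $\strat{\pone}$, $\bigoplus_{i\in I}$, the probabilistic summation $\bigoplus_{j\in J}p_j\varphi_j$, and $\bigwedge$, again operators of $\LP$. Consequently $s\Sim^\L t$ entails $s\Sim^\L_n t$ for every $n$, and Lemma~\ref{lem:comp:approx} then gives $s\Sim_n t$ for every $n$.

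Let $\Sim^\omega:=\bigcap_{n\in\Nat}\Sim_n$, so the pair $(s,t)$ lies in $\Sim^\omega$. The heart of the argument is to show that $\Sim^\omega$ is itself a PA-simulation. The chain $\Sim_0\supseteq\Sim_1\supseteq\Sim_2\supseteq\ldots$ is descending (an immediate consequence of Definition~\ref{def:simulation-n}) and lives inside the finite poset of binary relations on $S\times S$, so it stabilises at some index $N$ with $\Sim_N=\Sim_{N+1}=\Sim^\omega$. Label agreement for $\Sim^\omega$ is inherited from $\Sim_0$; unfolding the definition of $\Sim_{N+1}$ against the identification $\Sim_N=\Sim^\omega$ yields exactly the transfer condition of Definition~\ref{def:sim}: for every $\pi_1\in\Pi_\pone$ there is $\pi_2\in\Pi_\pone$ with $\pd\step(s,\pi_1)\,(\lift{\Sim^\omega})_{Sm}\,\pd\step(t,\pi_2)$. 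Hence $\Sim^\omega$ is a PA-simulation, and since $\Sim$ is the largest such relation, $\Sim^\omega\subseteq\Sim$, giving $s\Sim t$ as required.

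The potentially delicate point, and the one the proof pivots on, is the uniform choice of the witnessing mixed action $\pi_2$: a priori, the witnesses produced at each level $n$ may differ, and no obvious compactness on the space of mixed actions is at hand. Stabilisation of the chain, which is available precisely because $S$ is finite, collapses all levels $n\geq N$ into one and supplies a single $\pi_2$ that works simultaneously for every level. This is the only non-trivial use of the finiteness hypothesis on $S$, and without it one would typically need either a compactness argument on $\Pi_\pone$ or an ordinal-indexed iteration beyond $\omega$.
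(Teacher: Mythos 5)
Your proof is correct and follows essentially the same route as the paper: reduce to the approximants via Lemma~\ref{lem:comp:approx}, then use finiteness of $S\times S$ to stabilise the descending chain $\Sim_0\supseteq\Sim_1\supseteq\cdots$ and conclude containment in $\Sim$. The only difference is presentational — the paper simply asserts $\Sim\,=\,\Sim^n$ for some $n$, whereas you explicitly verify that the stabilised relation $\Sim^\omega$ satisfies the transfer condition of Definition~\ref{def:sim} and is therefore contained in the largest PA-simulation, which is a welcome filling-in of a step the paper leaves implicit.
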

\begin{proof}
In a finite state PGS (i.e., the space $S\times S$ is finite) there exists $n\in\Nat$ such that $\Sim\ =\ \Sim^n$.
Since $\Sim^\L\ \subseteq\ \Sim^\L_n$ for all $n$, and $\Sim^\L_n\ \subseteq\ \Sim_n$
by Lemma~\ref{lem:comp:approx}, we have $\Sim^\L\ \subseteq\ \Sim^n\ =\ \Sim$.
\end{proof}

% =============================================
\section{Probabilistic Alternating-time Mu-Calculus}
\label{sec:pamu}
% =============================================
Modal logics of finite modality depth are not enough to express temporal requirements such as %goal is eventually reachable
% a formula always holds.
``something bad never happens''. In this section,
we extend the logic $\LP$ into a Probabilistic Alternating-time $\mu$-calculus (PAMu), %(or the logic $\LM$),
by adding variables and fixpoint operators.

\medskip

%$$\varphi ::= p\mid \neg p\mid \bigwedge_{i\in I}\varphi_i\mid \bigvee_{i\in I}\varphi_i\mid \strat{\pone}\varphi \mid \bigoplus_{j\in J}p_j\varphi_j \mid \bigoplus_{j\in J}\varphi_j \\
%\mid Z \mid \mu Z.\varphi \mid \nu Z.\varphi$$

\begin{tabular}{rl}
 $\varphi$ $::=$ & $p\mid \neg p\mid \bigwedge_{i\in I}\varphi_i\mid \bigvee_{i\in I}\varphi_i\mid \strat{\pone}\varphi \mid \bigoplus_{j\in J}\varphi_j $  \\[5pt]
  &  $\mid \bigoplus_{j\in J}p_j\varphi_j \mid Z \mid \mu Z.\varphi \mid \nu Z.\varphi$  \\
 \end{tabular}

\medskip

Let the environment $\rho:\VAR\rightarrow\power(\dist(S))$ be a mapping from variables in $\VAR$ to sets of distributions on states,
and %in addition to the semantics in Section~\ref{sec:logic},
the semantics of the fixpoint operators of PAMu are defined in the standard way.
\begin{itemize}
\item $\sem{p}_\rho=\set{\Delta\in\dist(S)\mid \forall s\in\Supp{\Delta}: p\in L(s)}$;
\item $\sem{\neg p}_\rho=\set{\Delta\in\dist(S)\mid \forall s\in\Supp{\Delta}: p\not\in L(s)}$;
\item $\sem{\bigwedge_{i\in I}\varphi_i}_\rho=\bigcap_{i\in I}\sem{\varphi_i}_\rho$; $\sem{\bigvee_{i\in I}\varphi_i}=\bigcup_{i\in I}\sem{\varphi_i}_\rho$;
\item $\sem{\strat{\pone}\varphi}_\rho = \set{\Delta\in\dist(S)\mid \exists \pi_1\in\Pi_\pone: \forall \pi_2\in\Pi_\ptwo: \Delta\xrightarrow{\pi_1,\pi_2}\Theta\mbox{ implies }\Theta\in\sem{\varphi}_\rho}$;
\item $\sem{\bigoplus_{j\in J}p_j\varphi_j}_\rho=\set{\Delta\in\dist(S)\mid\Delta=\sum_{j\in J}p_j\Delta_j\wedge \forall j\in J: \Delta_j\in\sem{\varphi_j}_\rho}$;
\item $\sem{\bigoplus_{j\in J}\varphi_j}_\rho=\set{\Delta\in\dist(S)\mid\exists\set{p_j}_{j\in J}: \sum_{j\in J}p_j=1\wedge\Delta=\sum_{j\in J}p_j\Delta_j\wedge \forall j\in J: \Delta_j\in\sem{\varphi_j}_\rho}$;
\item $\sem{Z}_\rho = \rho(Z)$;
\item $\sem{\mu Z.\varphi}_\rho = \bigcap\set{D\subseteq\dist(S)\mid \sem{\varphi}_\rho[Z\mapsto D]\subseteq D}$;
\item $\sem{\nu Z.\varphi}_\rho = \bigcup\set{D\subseteq\dist(S)\mid D\subseteq \sem{\varphi}_\rho[Z\mapsto D]}$.
\end{itemize}
%We assume every variable $Z$ is guarded by even number of negations. -- no such need here

The set of closed PAMu formulas are the formulas with all variables bounded, which form the set $\LM$, and we can safely drop the environment $\rho$ for those formulas.
%CZ2: cut the following example for now
%\commentout{
\begin{example}
For the rock-paper-scissors game in Figure~\ref{fig:pgs},
the property describing that player $\pone$ %(same for player $\ptwo$)
has a strategy to eventually win the game once
can be expressed as
$\mu Z. {\sf win}_{\pone} \vee\strat{\pone}Z$.
This property does not hold. However, player $\pone$ has a strategy to eventually win the game with probability almost $\frac{1}{2}$,
i.e., the system satisfies $\mu Z. ([\frac{1}{2}-\epsilon,{\sf win}_{\pone}]\oplus[\frac{1}{2}+\epsilon,\top]) \vee\strat{\pone}Z$ for arbitrarily small $\epsilon>0$.
%CZ1:
%Another interesting property is that player $\pone$
%has an equal chance as $\ptwo$ to win the game.
%CZ2: If player $\pone$ applies such a strategy, player $\ptwo$ has an equal chance to win the game.
%This is expressed as
%$\strat{\pone}[\frac{1}{2},{\sf win}_{\pone}]\oplus[\frac{1}{2},{\sf win}_{\ptwo}]$.
%$\strat{\pone}[\frac{1}{2}-\epsilon,{\sf win}_{\pone}]\oplus[\frac{1}{2}-\epsilon,{\sf win}_{\ptwo}]\oplus[2\cdot \epsilon,\top]$ for arbitrarily small %$\epsilon>0$.
We explain the reason why players can only enforce $\epsilon$-optimal strategies in later part of the section.
\end{example}%}%END

%CZ2:Next we show that we can extend the logic characterisation of PA-Simulation to PAMu as well.
The logic characterisation of PA-Simulation can be extended to PAMu.
\begin{theorem}\label{thm:pamu}
Given $\Delta,\Theta\in\dist(S)$, $\Delta\lift\Sim\Theta$ iff
$\set{\varphi\in\LM\mid\Delta\models\varphi}\subseteq\set{\varphi\in\LM\mid\Theta\models\varphi}$.
\end{theorem}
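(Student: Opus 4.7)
I would prove the theorem by handling the two directions separately. The forward direction extends Theorem~\ref{thm:soundness} to fixpoints, and the backward direction reduces to the characterisation already established for $\LP$.

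For the forward (soundness) direction, I would prove by structural induction on $\varphi$ a strengthened claim covering open formulas: say two environments $\rho,\rho'$ are \emph{$\Sim$-compatible} if for every variable $Z$ and every pair $\Delta'\lift\Sim\Theta'$, $\Delta'\in\rho(Z)$ implies $\Theta'\in\rho'(Z)$; then for $\Sim$-compatible $\rho,\rho'$ and $\Delta\lift\Sim\Theta$, $\Delta\in\sem{\varphi}_\rho$ implies $\Theta\in\sem{\varphi}_{\rho'}$. All cases shared with $\LP$ are handled exactly as in Theorem~\ref{thm:soundness} (using Lemma~\ref{lem:lifted:sim} for $\strat{\pone}$ and Lemma~\ref{lem:dist-split} for the $\oplus$ operators). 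The variable case is immediate from $\Sim$-compatibility. For $\mu Z.\psi$, use the Knaster--Tarski approximation $\sem{\mu Z.\psi}_\rho=\bigcup_\alpha X^\rho_\alpha$ with $X^\rho_0=\emptyset$, $X^\rho_{\alpha+1}=\sem{\psi}_{\rho[Z\mapsto X^\rho_\alpha]}$, and unions at limit ordinals; a transfinite induction on $\alpha$, using the outer induction hypothesis on $\psi$, shows that the pair $(\rho[Z\mapsto X^\rho_\alpha],\rho'[Z\mapsto X^{\rho'}_\alpha])$ remains $\Sim$-compatible at every stage. The case of $\nu Z.\psi$ is dual, descending from $\dist(S)$.

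For the backward (completeness) direction, observe that $\LP\subseteq\LM$, so the hypothesis restricts to the fixpoint-free fragment, and it suffices to show that $\LP$-inclusion of distributional theories already entails $\lift\Sim$. Here I would reuse the characteristic-formula machinery of Section~\ref{sec:completeness}: by Lemma~\ref{lem:sound:characteristic} we have $\Delta\models\phi^n_\Delta$ for every $n\in\Nat$, and since $\phi^n_\Delta\in\LP$, the hypothesis gives $\Theta\models\phi^n_\Delta$. Unfolding $\phi^n_\Delta=\sum_{s\in\Supp{\Delta}}\Delta(s)\cdot\phi^n_s$ via the semantics of the fixed-weight $\oplus$ produces a decomposition $\Theta=\sum_s\Delta(s)\Theta_s$ with $\Theta_s\models\phi^n_s$; combining this with the distributional lift of Lemma~\ref{lem:comp:approx} yields a weight function witnessing $\Delta\,\lift\Sim_n\,\Theta$. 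Finiteness of $S\times S$ stabilises the descending chain $\Sim_0\supseteq\Sim_1\supseteq\dots$ at some $n_0$ with $\Sim=\Sim_{n_0}$, so $\Delta\lift\Sim\Theta$ as required.

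The principal obstacle I anticipate is the fixpoint case of soundness: although the argument follows a classical pattern, maintaining $\Sim$-compatibility of environments through transfinite unfolding while threading the $\strat{\pone}$ modality (the only place where mixed strategies actually engage with the relation) requires careful bookkeeping. The completeness half is essentially automatic once one notices that the characteristic formulas of Section~\ref{sec:completeness} already live inside $\LP\subseteq\LM$, so no genuinely new machinery is needed beyond what has already been developed.
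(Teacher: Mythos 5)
Your proposal is correct, and its completeness half is exactly the paper's: since $\LP\subseteq\LM$, inclusion of $\LM$-theories gives inclusion of $\LP$-theories, and the characteristic-formula machinery of Section~\ref{sec:completeness} (Lemma~\ref{lem:sound:characteristic}, the distributional step inside the proof of Lemma~\ref{lem:comp:approx}, and stabilisation of $\Sim_n$ over the finite space $S\times S$) yields $\Delta\lift\Sim\Theta$; you are right that the only point needing care is that Theorem~\ref{thm:completeness} is stated for states while the claim here is distributional, and your use of $\phi^n_\Delta$ handles that. The soundness half, however, takes a genuinely different route. The paper unfolds fixpoints into $\omega$-approximants $\mu^\omega Z.\varphi=\bigvee_{i\in\Nat}\mu^iZ.\varphi$ and $\nu^\omega Z.\phi=\bigwedge_{i\in\Nat}\nu^iZ.\phi$, argues via Lemma~\ref{lem:approximants} that these are semantically equal to the fixpoints, and then invokes Theorem~\ref{thm:soundness} wholesale, since the approximants are (infinitary) $\LP$ formulas. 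That argument is shorter, but it rests on the approximation closing at $\omega$: the ``$\mu^\omega Z.\varphi$ is a prefixpoint'' step implicitly needs $\omega$-continuity of the semantic functional over the uncountable lattice $\power(\dist(S))$, which is delicate for $\strat{\pone}$ (a single mixed action $\pi_1$ must cover an infinite polytope of successors $\pd\step(\Delta,\word{\pi_1,\pi_2})$, which need not all fall into one finite stage of an increasing union). Your direct structural induction with $\Sim$-compatible environments and transfinite Knaster--Tarski iteration avoids this entirely --- it needs only monotonicity, handles nested fixpoints uniformly through the environment invariant, and reuses the $\LP$ cases of Theorem~\ref{thm:soundness} verbatim. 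So your proof is more robust where the paper's is more economical; both are legitimate, but yours does not depend on Lemma~\ref{lem:approximants}.
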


Since $\LP$ is syntactically a sublogic of $\LM$, we only need to show the soundness of PA-simulation to the logic $\LM$.
We apply the classical approach of approximants for Modal Mu-Calculus~\cite{Bradfield06}.
Given formulas $\mu Z.\varphi$ and $\nu Z.\phi$, we define the following.

\smallskip

\begin{tabular}{ll}
 $\mu^0Z.\varphi = \bot$ &  $\nu^0Z.\phi = \top$ \\[5pt]
 $\mu^{i+1}Z.\varphi = \varphi[Z\mapsto\mu^{i}Z.\varphi]$\hspace{20pt} &  $\nu^{i+1}Z.\phi = \phi[Z\mapsto\nu^{i}Z.\phi]$  \\[5pt]
 $\mu^\omega Z.\varphi = \bigvee_{i\in\Nat}\mu^iZ.\varphi$ & $\nu^\omega Z.\phi = \bigwedge_{i\in\Nat}\nu^iZ.\phi$. \\
 \end{tabular}

\smallskip

%\begin{lemma}\label{lem:approximants}
Next we show approximants are semantically equivalent to the fixpoint formulas. %CZ2:\begin{enumerate}
%\item $\sem{\mu^\omega Z.\varphi} = \sem{\mu Z.\varphi}$;
%\item $\sem{\nu^\omega Z.\phi} = \sem{\nu Z.\phi}$.
%\end{enumerate}
\begin{lemma}\label{lem:approximants}
(1) $\sem{\mu^\omega Z.\varphi} = \sem{\mu Z.\varphi}$;\ (2) $\sem{\nu^\omega Z.\phi} = \sem{\nu Z.\phi}$.
\end{lemma}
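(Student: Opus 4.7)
My plan is the classical Kleene-style fixpoint approximation argument. Define $F_\varphi : \power(\dist(S)) \to \power(\dist(S))$ by $F_\varphi(D) = \sem{\varphi}_{\rho[Z \mapsto D]}$, and likewise $F_\phi$ for the greatest-fixpoint case. The syntactic restriction that negation occurs only on atoms ensures $Z$ appears positively, so a routine structural induction on $\varphi$ shows $F_\varphi$ is monotone on the complete lattice $(\power(\dist(S)), \subseteq)$. By Knaster--Tarski, $\sem{\mu Z.\varphi}_\rho$ and $\sem{\nu Z.\phi}_\rho$ are the least and greatest fixpoints of these operators.

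Next I would prove by induction on $i \in \Nat$ that $\sem{\mu^i Z.\varphi}_\rho = F_\varphi^i(\emptyset)$ and $\sem{\nu^i Z.\phi}_\rho = F_\phi^i(\dist(S))$, directly from the substitution-based definition of the approximants. Hence $\sem{\mu^\omega Z.\varphi}_\rho = \bigcup_i F_\varphi^i(\emptyset)$ and $\sem{\nu^\omega Z.\phi}_\rho = \bigcap_i F_\phi^i(\dist(S))$. One inclusion in each claim is immediate from monotonicity: since $\sem{\mu Z.\varphi}_\rho$ is a fixpoint, $F_\varphi^i(\emptyset) \subseteq \sem{\mu Z.\varphi}_\rho$ follows by induction on $i$, and the union is too; the $\nu$ case is dual.

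For the converse, I would show that $\bigcup_i F_\varphi^i(\emptyset)$ is itself a prefixed point of $F_\varphi$, by which minimality of the least fixpoint closes the argument. This reduces to $F_\varphi(\bigcup_i D_i) \subseteq \bigcup_i F_\varphi(D_i)$ on the ascending chain $D_i = F_\varphi^i(\emptyset)$, i.e., $\omega$-continuity of $F_\varphi$. By structural induction on $\varphi$, the cases $p, \neg p, \bigwedge, \bigvee, \bigoplus p_j \varphi_j, \bigoplus \varphi_j$, and the variable $Z$ commute with directed unions straightforwardly (finite index sets for the Boolean connectives ensure that intersection commutes with ascending union).

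The main obstacle is the modal case $\strat{\pone}\psi$, whose semantics couples an existential quantifier over $\pi_1 \in \Pi_\pone$ with a universal over the continuous mixed-strategy space $\pi_2 \in \Pi_\ptwo$: given $\exists \pi_1 \forall \pi_2\colon \pd\step(\Delta,\pi_1,\pi_2) \in \bigcup_i D_i$ I need to extract a single index $i$ uniform in $\pi_2$. To handle this, I would use the multilinearity of $\pd\step(\Delta,\pi_1,\cdot)$ together with Lemma~\ref{lem:split:mix-action} to decompose any mixed $\pi_2$ as a convex combination of the finitely many deterministic actions $\pd{b}$ with $b \in \A_\ptwo$; combined with the $\bigoplus$-semantics applied inductively to $\psi$, the universal over $\Pi_\ptwo$ reduces to the finite set of deterministic actions, at which point finiteness of $\A_\ptwo$ delivers a common index. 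The $\nu$ direction is treated dually via $\omega$-cocontinuity on the descending chain $F_\phi^i(\dist(S))$.
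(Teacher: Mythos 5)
Your skeleton coincides with the paper's: Knaster--Tarski plus an induction on $i$ gives $\sem{\mu^iZ.\varphi}\subseteq\sem{\mu Z.\varphi}$ for the first inclusion, and the second inclusion is reduced to showing that $\sem{\mu^\omega Z.\varphi}$ is a prefixed point. The paper dismisses that last step as ``straightforward''; you correctly identify it as an $\omega$-continuity claim, $F_\varphi(\bigcup_iD_i)\subseteq\bigcup_iF_\varphi(D_i)$, whose only delicate case is $\strat{\pone}\psi$. That diagnosis is more honest than the paper's, but your patch for the delicate case does not close it. First, a minor point: a mixed action is state-dependent, so the decomposition via Lemma~\ref{lem:split:mix-action} is over the finitely many deterministic \emph{strategies} $f:S\rightarrow\A_\ptwo$ rather than single actions $b\in\A_\ptwo$; this is still a finite set, so it is harmless. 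The genuine gap is what happens after you obtain a common index $i$ with $\pd\step(\Delta,\word{\pi_1,\pd{f}})\in E_i=\sem{\psi}_{\rho[Z\mapsto D_i]}$ for every deterministic $f$. For an arbitrary $\sigma\in\Pi_\ptwo$ you have $\pd\step(\Delta,\word{\pi_1,\sigma})=\sum_fp_f\,\pd\step(\Delta,\word{\pi_1,\pd{f}})$, and concluding that this convex combination also lies in $E_i$ requires $E_i$ to be closed under convex combinations. Denotations of formulas in this logic are not convex in general: already $\sem{p\vee q}$ fails, since a mixture of a Dirac distribution satisfying $p$ and one satisfying $q$ satisfies neither disjunct. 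Invoking ``the $\bigoplus$-semantics applied inductively to $\psi$'' does not help, because $\psi$ is arbitrary and $\sem{\bigoplus_j\psi_j}$ is itself convex only when the $\sem{\psi_j}$ are. So the reduction of the universal quantifier over the uncountable set $\Pi_\ptwo$ to the finite deterministic case is exactly the step that fails.

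Two further remarks. The paper's syntax does not restrict $\bigwedge_{i\in I}$ to finite index sets (indeed $\nu^\omega Z.\phi$ is itself an infinite conjunction), so your continuity induction would also have to handle infinite conjunctions, where intersection does not commute with increasing unions; your parenthetical appeal to ``finite index sets for the Boolean connectives'' imports a hypothesis the logic does not grant. In fairness, the paper's own proof rests on the same unproved continuity assertion, so your attempt is not worse than the source --- but as a proof it is incomplete: either supply an argument that the specific ascending chain $E_i=\sem{\psi[Z\mapsto\mu^iZ.\varphi]}$ has enough structure (e.g.\ a suitable closure or compactness property of definable sets of distributions) to make the $\strat{\pone}$ case go through, or fall back on transfinite approximants, for which the prefixed-point property at the closure ordinal is automatic.
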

We briefly sketch a proof of Lemma~\ref{lem:approximants}(1), and the proof for the other part of the lemma is similar.
To show $\sem{\mu^\omega Z.\varphi} \subseteq \sem{\mu Z.\varphi}$, we initially have
$\sem{\mu^0Z.\varphi} = \emptyset \subseteq\sem{\mu Z.\varphi}$, then by the monotonicity of $\varphi$,
given $\sem{\mu^iZ.\varphi}\subseteq\sem{\mu Z.\varphi}$, we prove $\sem{\mu^{i+1}Z.\varphi}\subseteq\sem{\mu Z.\varphi}$
by applying $\varphi$ on both sides of $\subseteq$. Therefore, $\sem{\mu^iZ.\varphi}\subseteq\sem{\mu Z.\varphi}$ for all $i\in\Nat$, thus
$\sem{\bigvee_{i\in\Nat}\mu^iZ.\varphi}\subseteq\sem{\mu Z.\varphi}$. To show $\sem{\mu Z.\varphi}\subseteq \sem{\mu^\omega Z.\varphi}$,
it is straightforward to see that $\mu^\omega Z.\varphi$ is a prefixpoint, therefore it contains $\mu Z.\varphi$, the intersection of all prefixpoints.

From Lemma~\ref{lem:approximants}, and by the soundness of PA-simulation to $\LP$ (Theorem~\ref{thm:soundness}),
we get the the soundness of PA-simulation to the logic $\LM$, as required.

% =============================================
%\subsection{Expressiveness of PAMu}
% =============================================

%\fbox{Comparison to State-Based Probabilistic Game Logics}

\smallskip\noindent
{\bf Expressiveness of PAMu.}
There exist game-based extensions of probabilistic temporal logics, such as the logic PAMC~\cite{Song2019} that extends the
Alternating-time Mu-Calculus~\cite{AHK02}, and PATL~\cite{CL07} that extends PCTL~\cite{HJ94}. The semantics of both logics are
sets of states, rather than sets of distributions. It has also been shown in~\cite{Song2019} that PAMC and PATL are incomparable on
probabilistic game structures, based on a result showing that PCTL and P$\mu$TL are incomparable on Markov chains~\cite{Liu2015}.
%Regarding the expressiveness of PAMu, it does not seem comparable with either PATL or PAMC on two-player probabilistic game structures.
%(Due to lack of space we do not introduce detailed semantics of PAMC or PATL.)
Here we make a short comparison between PAMu and those logics.

Distribution formulas of PAMu cannot be expressed by state-based logics. %, even we restrict the logic to point distributions.
For example, the formula $\strat{\pone}[\frac{1}{2},p]\oplus[\frac{1}{2},q]$,
expressing that player $\pone$ has a strategy to enforce %$[\frac{1}{2},p]\oplus[\frac{1}{2},q]$,
in the next move a distribution which has half of its weight satisfying $p$ and the other half satisfying $q$,
cannot be expressed by PATL or PAMC.
As the latter two logics have probability values bundled with strategy modalities, a formula such as
$\strat{\pone}^{\geq\frac{1}{2}}p\wedge\strat{\pone}^{\geq\frac{1}{2}}q$ denotes that player $\pone$ has a strategy to enforce
$p$ with at least probability $\frac{1}{2}$ in the next step and player $\pone$ also has a \emph{possibly different} strategy to enforce
$q$ with at least probability $\frac{1}{2}$ in the next step. However, the resulting states (or distributions)
that satisfy $p$ and $q$ may overlap.

\begin{figure}[!t]
\centering
\includegraphics[height=2.0cm]{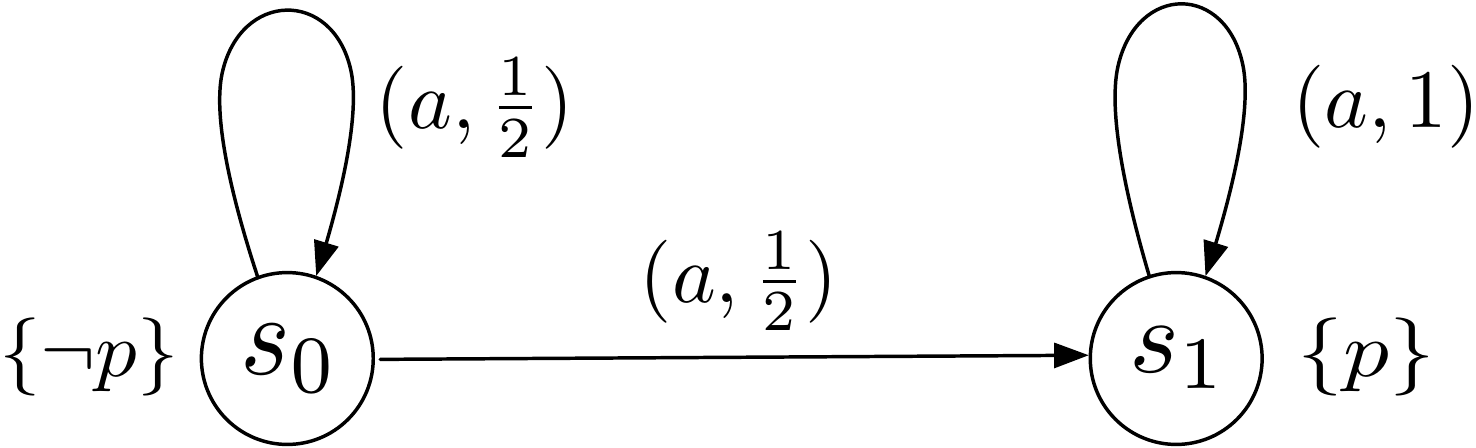}
\caption{An example for $\strat{\pone}^{\geq 1}\eventually p$.}\label{fig:example-PATL}
\end{figure}

%\begin{figure}[!t]
%\center
%\begin{tikzpicture}[->, >=stealth]
%\node[state] (q1) at (0,0) {$\neg p$};
%\node[right of=q1,xshift=-0.4cm,yshift=-0.3cm]{$s_0$};
%\node[state] (q2) at (2.5,0) {$p$};
%\node[right of=q2,xshift=-0.4cm,yshift=-0.3cm]{$s_1$};
%\draw
%(q1) edge[loop above] node{$(a,\frac{1}{2})$} (q1)
%(q1) edge[above] node{$(a,\frac{1}{2})$} (q2)
%(q2) edge[loop above] node{$(a,1)$} (q2);
%\end{tikzpicture}
%\caption{An example for $\strat{\pone}^{\geq 1}\eventually p$}\label{fig:example-PATL}
%\end{figure}

\commentout{%CZ2:
The formula expressible by PATL that is not expressible by PAMC or P$\mu$TL (presented in~\cite{Liu2015}) is also not expressible by PAMu.
For example, the}%END
The PATL formula $\strat{\pone}^{\geq 1}\eventually p$ is not expressible by PAMu. %CZ2: Taking the PGS in Figure~\ref{fig:example-PATL},
%we assume
Given the PGS in Figure~\ref{fig:example-PATL} where player $\pone$ has action set $\set{a}$ and player $\ptwo$ has action set $\emptyset$.
Then it is straightforward to see %CZ1:
both $s_0$ and $s_1$ satisfies $\strat{\pone}^{\geq 1}\eventually p$.
The closest formula in PAMu is $\mu Z.p\vee\strat{\pone}Z$, but
$s_0\not\models\mu Z.p\vee\strat{\pone}Z$. %On the other hand,
More precisely, $s_0\models\mu Z.([\alpha,p]\oplus[1-\alpha,\top])\vee\strat{\pone}Z$
for all $0\leq\alpha<1$. Intuitively, the semantics of the least fixpoint operator in PAMu only track finite number of probabilistic transitions,
%which can only guarantee the
as starting from $s_0$, player $\pone$ can only reach distributions that satisfy
$p$ with probability strictly less than $1$ with finite number of steps.\footnote{%CZ1:
Intuitively, $\pd{s_0}\LT{a}[\frac{1}{2},s_0]\oplus[\frac{1}{2},s_1]\LT{a}[\frac{1}{4},s_0]\oplus[\frac{3}{4},s_1]\LT{a}\ldots\LT{a}[\frac{1}{2^i},s_0]\oplus[1-\frac{1}{2^i},s_1]\ldots$.
We shall see that in finite number of transitions one never reaches $\pd{s_1}$ from $\pd{s_0}$ with strict probability $1$.
However, such a restriction may be alleviated in practice, as
implemented in PRISM-game~\cite{KNPS18}, $\epsilon$-optimal strategies are synthesized for unbounded reachability properties.}
%the algorithms implemented in PRISM-game~\cite{KNPS18} apply value iteration
%to approach the goal and synthesize $\epsilon$-optimal strategies for unbounded reachability properties.}
%However, showing that $s_0$ to satisfy $p$ in the future with probability $1$ requires to sum up all future reachable distributions
%
\commentout{%CZ2:
PAMC formulas in the form of $\strat{\pone}^{>\alpha}\varphi$ are not expressible by finite
conjunctions in PAMu. For example, $\strat{\pone}^{>\alpha}p$ may only be represented by (an infinite conjunction)
$\strat{\pone}\bigvee_{\beta>\alpha}[\beta,p]\oplus[1-\beta,\top]$ in PAMu. %CZ1: However,
Moreover,}%END
PAMC formulas that contain the $\nostrat{\pone}^{\bowtie\alpha}$ modalities do not seem expressible in PAMu.
For instance, the PAMC formula $\nostrat{\pone}^{\geq\alpha}\varphi$~\footnote{$\nostrat{\pone}^{\geq\alpha}\varphi$ can be interpreted
as for all player $\pone$ strategies $\pi$, there exists player $\ptwo$ strategy $\sigma$, such that the combined effect
of $\pi$ and $\sigma$ enforces $\varphi$ with probability at least $\alpha$.} is semantically equivalent to $\neg\strat{\pone}^{<\alpha}\varphi$,
%although
which is not expressible by PAMu as negation is only allowed at the propositional level in PAMu.
%CZ4:
Nevertheless, the focus of PAMu is more on logic characterisation than expressiveness.

%CZ3: \commentout{%CZ2: lets cut this for now
\begin{example}
The authors of~\cite{KNPS18} proposed a CSG variant of a futures market investor model~\cite{McIver2007},
which studies the interactions between an investor and a stock market. The
investor and the market take their decisions simultaneously in the CSG model,
and the authors show that this does not give any additional benefits to the investor
by analysing his maximum expected value over a fixed period of time.\footnote{
For details of the model, we refer to~\cite{McIver2007} and the website
{\small \url{https://www.prismmodelchecker.org}}.}
We take this example to show the expressiveness of PAMu.
For instance, the property ``it is always possible for the investor to cash in''
can be specified with two nested fixpoints as
\[
\nu X. (\mu Y. {\sf cashin} \lor \strat{\it investor} Y) \land \strat{\it investor} X).
\]
Another interesting property is to check whether the investor has a strategy to ensure
a good chance for him to make a profit.
This can be formulated in PAMu with $\frac{1}{2}<\alpha\leq 1$, as
\[
\mu Z. ({\sf cashin} \land [\alpha,{\sf profit}]\oplus[ 1-\alpha, \top]) \lor \strat{\it investor} Z
\]

\end{example}
%}%END
% =============================================
\section{Related Work}
\label{sec:related}
% =============================================

%(1) Part of those included in the old paper
%(2) PRISM-game and related algorithms (QEST'18, STTT'18)
%(3) the new papers on fixpoint $\oplus$ probability or ATL (AAAI'19, IJCAI'15)
%(4) those based on quantity/metrics (de Alfaro et al. McIver/Morgan)

Segala and Lynch~\cite{SL95} %extend the classical notions of simulation to %probabilistic automata.
%Their main idea is to relate probability distributions over states, instead of relating individual states.
%CZ2: They show that the new relation, called probabilistic simulation,
%Their
introduce a
probabilistic simulation relation
which
preserves probabilistic computation
tree logic (PCTL) formulas without negation and existential quantification.
Segala introduces the notion of probabilistic forward simulation,
which relates states to probability distributions over states
and is sound and complete for trace distribution precongruence~\cite{Seg95a,LSV07}.
Parma and Segala~\cite{PS07} %CZ2: study logic characterisation of probabilistic bisimulation for image-finite probabilistic automata. They
use a probabilistic extension of the Hennessy-Milner logic
which allows countable conjunction and admits a new operator $[\phi]_p$
-- a distribution satisfies $[\phi]_p$ if the probability on the set of states satisfying $\phi$ is at least $p$,
with a sound and complete logic characterisation.
%Their logic characterisation is both sound and complete.
Hermanns et al.~\cite{HPSWZ11} further extend this result for image-infinite probabilistic automata.
Deng et al.~\cite{DGHMZ07,DG10} %CZ3: extend the non-probabilistic mu-calculus
%by adding
introduce a few probabilistic operators to derive a probabilistic modal mu-calculus (pMu).
%Among them, the finite summation operator $\sum_{j\in J}p_j\varphi_j$
%states that a distribution satisfying such a formula can be split
%over a set of distributions,
%each part with a pre-defined weight $p_j$ satisfying sub-formula $\varphi_j$.
A fragment of pMu %(without fixed points)
%has been
is proved to characterise (strong) probabilistic simulation in finite-state probabilistic automata.
%More recently,
%CZ2: moved to above, Hermanns et al.~\cite{HPSWZ11} further extend the result of~\cite{PS07}
%for image-infinite probabilistic automata by considering probabilistic simulation relations as well.%\footnote{Logic characterisation
%of weak probabilistic bisimulation
%has been studied in~\cite{DGJP10}, where the logic PCTL$^\star$ is used.
%This result is extended to weak probabilistic simulation by Parma~\cite{Par08}.}

Alur, Henzinger and Kupferman~\cite{AHK02} define alternating-time temporal logic (ATL)
to generalise CTL for game structures by requiring each path quantifier to be parameterised by a set of agents.
GS are more general than LTS,
in the sense that they allow both collaborative and adversarial behaviours of
individual agents in a system, and ATL can be used to express properties like
``a set of agents can enforce a specific outcome of the system''.
%Alternating refinement relations, %in particular
The alternating simulation,
which is %are
a natural game-theoretic interpretation of the classical simulation in (deterministic) multi-player games,
%are
is introduced in~\cite{AHKV98}. %CZ2:, and
%alternating simulation is a natural game-theoretic interpretation of the classical simulation in (deterministic) multi-player games.
Logic characterisation of this relation concentrates on
a subset of ATL$^\star$ formulas where negations are only allowed at propositional level
and all path quantifiers are parameterised by a predefined set of agents. %% $A$.
%CZ2: This sublogic of ATL$^\star$ contains all formulas expressing the properties
%that agents in $A$ can enforce no matter what the other agents do.
%Alur et al.~\cite{AHKV98} have proved both soundness and completeness of their characterisation.

%\commentout{%CZ3:
Game structures deal well with systems in which the players
execute a \emph{pure strategy}, i.e., a strategy in which the moves
are chosen deterministically. However, %CZ2: probabilities are needed to
%express \emph{mixed strategies}, which are formed by probabilistically
%combining pure strategies. (Note that, in game theory, mixed strategies are necessary for
%a player to achieve \emph{optimal} rewards~\cite{NM47}.)
\emph{mixed strategies}, which are formed by %probabilistically
combining pure strategies, are necessary for a player to achieve \emph{optimal} rewards~\cite{NM47}.
Zhang and Pang~\cite{ZP10} extend the notion of game structures
to probabilistic game structures (PGS)
and introduce %CZ2: two notions of simulation for probabilistic game structures,
%--probabilistic alternating simulation and forward simulation,
%following the aforementioned results~\cite{SL95,AHKV98}.
%CZ2: and further prove soundness of probabilistic alternating forward simulation
notions of simulation that are sound
for a fragment of probabilistic alternating-time
temporal logic (\PATL), a probabilistic extension of~ATL.%}%END

Fixpoint logics for sets of states in Markov chains and PGS have been studied more recently in~\cite{Liu2015,Song2019},
and a short comparison is given in  Section~\ref{sec:pamu}.
%CZ2: we add satisfiability result in the journal version : , and similar
%to the construction of fixpoint formulas, our logic PAMu seems also enjoy small model property,
%so that every satisfiable PAMu formula has a finite model with exponential in size.
%CZ3:A short comparison between these logics and PAMu is given in Section~\ref{sec:pamu}.

%%% Metrics
%%Probabilistic
Metric-based simulation on game structures have been studied by de Alfaro et al~\cite{dAMRS08} regarding the
probability of winning games whose goals are expressed in quantitative $\mu$-calculus (qMu)~\cite{McIver2007}. Two
states are equivalent if the players can win the same games with the same probability from both states, %which
%yields a metrics based definition that measures quantity of ``similarity'' among states. In such a setting the
and %the quantity of
\emph{similarity} among states can thus be measured.
%CZ2: In such a setting the \emph{a priori} and \emph{posteriori} styled metric transformer are differentiated and the logic characterisation problem has
%been discussed regarding quantified satisfaction of qMu formulas.
Algorithmic verification complexities are further
studied for MDP and turn-based games~\cite{CdAMR10}.

%%% PRISM-games
More recently, algorithmic verification of turn-based and concurrent games have been implemented as an
extension of PRISM~\cite{Kwiatkowska2018,KNPS18}. The properties can be specified as state formulas,
path formulas and reward formulas. The verification procedure requires solving matrix games for concurrent game structures,
and it applies value iteration algorithms to approach the goal (similar to~\cite{AM04,dAMRS08}).
For unbounded properties, the synthesised strategy
is memoryless (but only $\epsilon$-optimal strategies). Finite-memory strategies are synthesised for bounded properties.

%%% games of incomplete information: Reif, Jamroga, and some others... ?
%%% PCTL/PCTL*/PATL/PATL*, with knowledge modalities ???
%%% probably stop for now as we don't need a too broad literature survey

%%% so far we focus on concurrent games with complete information, modal characterization and fixpoint logics on top of that direction.

% =============================================
\section{Conclusions and Future Work}
\label{sec:concl}
% =============================================

In this work, we have presented sound and complete modal characterisations of PA-simulation for concurrent games
by introducing two new logics $\LP$ and PAMu (with fixpoints).
Both logics incorporate nondeterministic and probabilisitic features and
express the ability of the players to enforce a property in current state. %next state or distribution.
In the future, we aim to %CZ2: develop efficient algorithms
study verification complexities for these two logics. %verifying properties expressed in these two logics.

%% The file named.bst is a bibliography style file for BibTeX 0.99c
%\bibliographystyle{named}

\bibliography{games}

\end{document}